\newcommand{\probln}[1]{p_{#1}}
\newcommand{\tP}{p_\tx}
\newcommand{\iftext}[1]{\text{#1}}
\newcommand{\noise}{\sigma_n^2}
\newcommand{\dist}[1]{\|#1\|}
\newcommand{\origin}{\mathbf{o}}
\newcommand{\palmprobx}[2]{\mathbb{P}^{#2}{\left[#1\right]}}
\newcommand{\palmexpectx}[2]{\mathbb{E}^{#2}{\left[#1\right]}}
\newcommand{\dd}{\mathrm{d}}
\newcommand{\fracS}[2]{#1/#2}
\newcommand\expect[1]{\mathbb{E}\left[#1\right]}
\newcommand\prob[1]{\mathbb{P}\left[#1\right]}
\newcommand\indside[1]{\mathbbm{1}\left({#1}\right)}
\newcommand{\SINR}{\mathsf{SINR}}
\newcommand{\Rate}{\mathsf{Rate}}
\newcommand{\Ball}{\mathcal{B}}
\newcommand{\expects}[2]{\mathbb{E}_{#1}\left[#2\right] }
\newcommand{\probs}[2]{\mathbb{P}_{#1}\left[#2\right] }
\newcommand{\laplace}[1]{\mathcal{L}_{#1} }
\newcommand{\ie}{{\em i.e.} }
\newcommand{\Pc}{\mathrm{p_c}}
\newcommand{\Rc}{\mathrm{r_c}}
\newcommand{\SThres}{\tau}
\renewcommand{\t}{\mathbf{t}}
\newcommand{\tx}{\mathsf{t}}
\renewcommand{\L}{\mathrm{L}}
\newcommand{\N}{\mathrm{N}}
\newcommand{\y}{{\mathbf{y}}}
\newcommand{\x}{{\mathbf{x}}}
\newcommand{\expU}[1]{e^{#1}}
\newcommand{\X}{\mathbf{X}}
\newcommand{\Y}{\mathbf{Y}}
\newcommand{\expS}[1]{\exp{\left(#1\right)}}
\def\home{\hbox{\kern3pt \vbox to13pt{}%
   \pdfliteral{q 0 0 m 0 5 l 5 10 l 10 5 l 10 0 l 7 0 l 7 5 l 3 5 l 3 0 l f
               1 j 1 J -2 5 m 5 12 l 12 5 l S Q }%
   \kern 13pt}}
\newtheorem{theorem}{Theorem}
\newtheorem{lemma}{Lemma}
\newtheorem{remark}{Remark}
\newcommand{\ue}{\mathrm{u}}
\newcommand{\sE}{\mathsf{E}}
\newcommand{\qr}{q}
\newcommand{\ql}{q'}
\newcommand{\Q}{Q}
\newcommand{\Pci}[1]{\mathrm{p}_{\mathrm{c}#1}}
\newcommand{\LAOP}{LAP~}
\newcommand{\isdefined}{\stackrel{\Delta}{=}}
\newcommand{\IB}{\mathsf{\ I}}
\renewcommand{\x}{{\boldsymbol{x}{}}}
\newtheorem{cor}{Corollary}
\renewcommand{\L}{\mathsf{L}}
\renewcommand{\N}{\mathsf{N}}
\newcommand{\F}{F}
\newcommand{\s}[1]{s_{#1}}
\newcommand{\probserv}[1]{g_{#1}}
\newcommand{\exc}[1]{e_{#1}}
\newcommand{\cgain}[1]{C_{#1}}
\renewcommand{\v}{v} 
\renewcommand{\u}{s} 
\renewcommand{\nu}{z}
\newcommand{\Lfunc}{\kappa}
\newcommand{\excNL}{e_1}
\newcommand{\excLN}{e_2}
\renewcommand{\excNL}{\exc{\N|\L}}
\renewcommand{\excLN}{\exc{\L|\N}}
\newcommand{\excNN}{\exc{\N|\N}}
\newcommand{\excLL}{\exc{\L|\L}}
\title{Impact of Blocking  Correlation on the Performance of mmWave Cellular Networks}
\author{Saurabh Kumar Gupta,  
	Vikrant Malik, 
		 Abhishek K. Gupta, 
		 and Jeffrey G. Andrews
		\thanks{
S. Gupta, V. Malik and A. Gupta are with Indian Institute of Technology Kanpur, India 208016. 
Email:{ saurabhg20@iitk.ac.in, vikrant@iitk.ac.in, gkrabhi@iitk.ac.in}. 
J. Andrews is with the University of Texas at Austin, TX 78705, USA. Email: {jandrews@ece.utexas.edu}. 
The support of  Science and Engineering Research Board (Grant SRG/2019/001459) and IITK (Grant EE/2017/157) is greatly acknowledged. Part of this paper was presented at {GLOBECOM} 2020 \cite{GupGC2020}.
		}
}
\begin{document}


\maketitle

\begin{abstract}
In mmWave networks, a large or nearby object can obstruct multiple communication links, which results in spatial correlation in the blocking probability between a user and two or more base stations (BSs). 
    This paper characterizes this blocking correlation and derives its impact on the signal-to-interference-plus-noise ratio (SINR)  of a mmWave cellular network.  
    We first present an exact analysis of a 1D network and highlight the impact of blocking correlation in the derived expressions. Gaining insights from the 1D analysis, we develop an analytical framework for a 2D network where we  characterize the sum interference at the user by considering the correlation between the blocking of serving and interfering links.   Using this, we derive the SINR coverage probability. Via simulations, we demonstrate that including blockage correlation in the analysis is required for accurate characterization of the system performance, in particular when the blocking objects tend to be large. 
\end{abstract}

\section{Introduction}
Communication at high frequencies, including mmWave and THz frequencies, can be severely degraded due to obstacles such as buildings, foliage, and humans, collectively known as \emph{blockages} \cite{AndrewsmmWaveTut2016, TriSabGupDhi2021}. Due to the high sensitivity of the link quality to blockage at these frequencies, correct modeling of blocking is essential to accurate analysis of these systems. Moreover, since the same object could block several paths \cite{GupAndHea18ICC}, correlation of blocking events is a key part of such a model. 

%

\subsection{Related Work}  
Various analytical approaches to modeling  blockages in  cellular networks have been proposed. 
Perhaps the most popular way to include blocking in the analysis, particularly for sub-mmWave frequencies, is by modeling it as a log-normal random variable. However, this simplistic approach misses many crucial features, such as the dependence of blocking probability on the link distance or a change in the effective pathloss exponent due to blocking. 
In recent works, particularly those based on stochastic geometry, a Poisson Boolean process \cite{BaccelliBook} has been used to model the set of random blockages. In this model, blockages are seen as random objects (e.g. in the form of linear segments \cite{BaiLineBlock2012} or other shapes including rectangular, polygonal, or circular  \cite{bai2014coneanalysis}) distributed in the space according to a Poisson point process (PPP). This model provides us with a link blocking probability expression which can be incorporated to investigate the impact of blockages on the SINR coverage of mmWave systems  \cite{bai2014coverage,bai2014analysis}. 
The framework was later extended to analyze various mmWave networks \cite{AndrewsmmWaveTut2016} including multiple tiers \cite{Ren2015}, spectrum sharing \cite{GupAndHea2016} and self-backhauled networks \cite{SinKulGhoAnd2015}.

All the aforementioned works assume that blocking events are independent of each other. However, in reality, the blocking events can be correlated. For example, a large blockage can block downlink signals from multiple BSs to a given user. This effect is more severe in a one dimension (1D) network.  For example, in a  quasi-1D deployment of BSs on an urban street, a single blockage will block all BSs behind it. 
In our past work \cite{gupta2017macrodiversity}, we characterized blocking correlation  in a 2D mmWave network in the presence of  linear blockages with random lengths and orientations. In particular, we derived the joint probability of two links to be  line-of-sight (LoS)  as a function of their lengths and the angle between them.  Using a similar blocking correlation model, the empirical SINR assuming a fixed number of blockages was computed in \cite{HriVal2019}. The blockage correlation was also studied in \cite{AdiDhiMolBeh2018}  for a sensor localization problem. Similar to spatial correlation, time correlation of blocking events caused by  the user mobility at two time instants was studied in \cite{Sam2016}. In \cite{baccelli2015correlated}, the authors studied the correlation in the interference between two locations of a user. However, the correlation due to simultaneous blocking of BSs was not considered.

The blocking correlation can have a significant effect on the performance of a mmWave cellular network. For example, the presence of a  LoS serving link may increase the probability of having LoS interferers, degrading the system performance. Similarly, the absence of an LoS serving link reduces the chance of having an LoS interferer, which improves the system performance. However, there exists no work which analyzes the impact of this correlation on the SINR or rate coverage of cellular networks, which is the main focus of this work.

\subsection{Contributions}
   In this paper, we study a mmWave network in the presence of random blockages, which are modeled using a Boolean Poisson process. In contrast to  previous works, we consider the correlation among blocking events occurring in links while deriving the performance of the network. We first present an exact analysis for a 1D network. Leveraging the qualitative and quantitative insights obtained from the 1D case, we achieve an approximate analysis for the 2D network. In particular, this work has the following contributions:
   \begin{enumerate}
       \item
       We analyze a 1D mmWave cellular network, corresponding to a street or highway.  For a user, the blocking states of all BSs depend on only two variables that are the locations of the nearest blockage on both sides of the user.  We calculate the distribution of the distance of the serving BS from the typical user conditioned on these variables. We, then derive the Laplace transform of the interference at the typical user, taking the blocking correlation into account. Using these results, we derive the exact SINR coverage probability of the network.
       \item  
       We analyze the more general 2D deployment of a cellular network. Exact analysis is intractable, so we consider the first order blocking correlation.  We first characterize the conditional probability of having a LoS and NLoS interferer at a particular location conditioned on the blocking state of serving link.  We  derive the distribution of distance to the serving BS from a typical user and compute the probability that the user is associated with a LoS or NLoS BS.  We then derive the Laplace transform of the interference at the user by considering the blocking correlation between the serving and interfering links. Using these results and approximations, we provide compact expressions for the SINR coverage probability, including simplified bounds, and the rate coverage expression.
       \item
       We present numerical results to demonstrate the impact of including blocking correlation on the analysis and show its importance in the accurate characterization of the system performance.  In particular, the effect can be severe when the blockages tend to be large objects. 
\end{enumerate}

\textbf{Notation}:  $x$ denotes the distance of a location $\x$ from the origin. $\laplace{X}(s)$ denotes the Laplace transform (LT) of a random variable (RV) $X$. $\palmprobx{\cdot}{\x}$ denotes the probability under the Palm measure \ie the probability conditioned on the occurrence of a point at $\x$. $\IB_n$ is the modified Bessel function of the first kind of the order $n$.

\section{System Model}
\label{chap:sys_model}

In this paper, we consider a mmWave cellular network in the presence of random blockages. The system model is as follows.

\textbf{Network model}: We consider that BSs are deployed as a homogeneous PPP 
$\Phi$ with density $\lambda$ in the space $\sE=\mathbb{R}^d$ for $d=1,2$. Due to inherent differences in their analysis, we consider 1D  and 2D  deployments separately in the  next two sections. We denote the location of $i$th BS by $\X_i$. The user process is assumed to stationary and independent of the BS process.  Its density is $\lambda_\ue$. Due to stationarity, we can take a typical user at the origin.  Let the distance of $i$th BS from this user is given as $X_i=\|\X_i\|$. The system bandwidth available at each BS is $B$.

\textbf{Blockage model}: Blockages are distributed as a Poisson Boolean process \cite{gupta2017macrodiversity} where blockages are in the form of line segments with their centers located according to a homogeneous PPP $\Psi$ of density $\mu$ in $\sE$. Let $L$ and $\Delta$ represent random variables denoting the length and orientation of a typical blockage. $L$ and  $\Delta$'s distributions are denoted by $F_{\mathrm{L}}(l)$ and $F_{\mathrm{\Delta}}(\delta)$. Let us consider a link between the $i$th BS and the user. If there is no blockage  falling on the link between the BS and the user, the signal received at the UE will be dominated by LoS component. Such a link is known as LoS link. In case of any blockage falling on the link between the BS and the user, the LoS component of the signal gets blocked by the blockage and BS will be non-line-of-sight (NLoS). Considering the randomness of blockages,  the event that this link is LoS can be characterized by a Bernoulli random variable with the parameter denoting LoS probability of the link, similar to \cite{bai2014analysis}.  
Let $\s{i}$ denote the blocking state (or type) of the link. Here, $\s{i}\in\{\L,\N\}$. The LoS probability of a link depends on its link-distance  and the exact behavior depends on the blockage environment. For the linear blockage model considered in this paper, 
the probability that a link with link-distance $r$ is LoS is given as
   $ \probln{\L}(r) = \exp(-\beta r)$
where the blockage parameter $\beta$ depends on the blockage process. Similarly, the probability that a link is NLoS is 
   $ \probln{\N}(r) = 1-\exp(-\beta r).
$

\textbf{Path-loss model}: 
In this paper, we  consider a general path-loss model where the power received at the user from a LoS BS and NLoS BS located at distance $x$ from it, is given by $\ell_{\L}(x)$ and $\ell_{\N}(x)$ respectively. The path-loss function satisfies the following  realistic conditions:
\begin{enumerate}
\item $\ell_{\L}(x) \ge \ell_{\N}(x) \ \forall x$. 
\item  $\ell_{\L}(x_1) \ge \ell_{\L}(x_2)$ and $\ell_{\N}(x_1) \ge \ell_{\N}(x_2) \  \forall x_1 < x_2$.
\item $C_\L = \max_x (\ell_{\L}(x)) <\infty $ and $C_\N = \max_x (\ell_{\N}(x))<\infty$.
\end{enumerate}
The first condition means that the power received from a LoS BS can not be smaller than that of a NLoS BS, located at the same distance from the user. The second condition means that  the received power  decreases with $x$. The third  condition implies that the path-loss is bounded. The transmit power $\tP$ of BS is included in the path-loss function.

 One special case is the {\em  standard bounded power law poth-loss} (BPLP) model where  $\ell_{\L}(x) = \min(C_\L, C_\L x^{-\alpha_\L})$ and $\ell_{\N}(x) = \min(C_\N, C_\N x^{-\alpha_\N})$, where $\alpha_s$ is the path-loss exponent and $C_s$ is the near-field gain for a link of type $s$ with $C_\L>C_N$. 
We also denote $\alpha = \frac{\alpha_\L}{\alpha_\N}$ and $c = \left(\frac{C_\N}{C_\L}\right)^{\frac{1}{\alpha_\N}}$.

We also consider another special case where NLOS links are in the complete outage \ie $\ell_\N(r)=0$ for all $r$. We call it {\em LOS association only path-loss} (LAP) model.

For simplicity, we consider Rayleigh fading which is independent across link. The fading corresponding to $i$th BS is $H_i$ which is distributed as $\mathsf{Exp}(1)$. Hence, the received power from the $i$th BS located at $\X_i$ is $P_i=H_i \ell_{s_i}\left({X_i}\right).$

\textbf{Cell-association model}: 
We consider the association based on the maximum average received power. In this criterion, a user connects to the BS providing  highest received power at the user averaged over fading which can be LoS or NLoS. For the typical user, the serving BS is termed the {\em tagged} BS and denoted by index $0$. Hence, the serving power is $S=P_0=H_0 \ell_{s_0}\left({X_0}\right)$. The rest of the BSs contribute to the  interference. Note that according to the  association rule, the user receives less power from each interfering BS compared to the serving BS \ie $P_i<S$. Therefore, condition on the serving BS's location creates an exclusion ball around the user where these BSs cannot be located \cite{GupAndHea2016}. The radius of this ball is different for LoS and NLoS BSs and it also depends on the type of the serving BS. If serving BS is LoS, the radius for LoS and NLoS BSs are $\excLL{}(X_0)$ and $\excNL{}(X_0)$ where 
\begin{align}
\excLL{}(x)=x, && \excNL{}(x)=\min_y \{y:\ell_\N(y) \le \ell_\L(x)\}.
\end{align}
Similarly,  if the serving BS is NLoS, the exclusion radius for LoS and NLoS BSs are $\excLN{}(X_0)$ and $\excNN{}(X_0)$   with 
\begin{align}
\excLN{}(x)=\min_y \{y:\ell_\L(y) \le \ell_\N(x)\}, && \excNN{}(x)=x
\end{align}
Further it can be shown that $
\excNL{}(x) < x$ and $ \excLN{}(x) > x$. 

Note that for the special case BPLP model, these functions are given as
\begin{align}
\excNL{}(x) \!= \!
    \begin{cases}
    0 &\!\!\!\! \text{if } x \leq \ell_{\L}^{-1}(C_\N) =c^{-\frac1\alpha}\\
    \ell_{\N}^{-1}(\ell_{\L}(x))=cx^\alpha &\!\!\!\! \text{if }x > \ell_{\L}^{-1}(C_\N)=c^{-\frac1\alpha}
    \end{cases},
    &&
    \excLN{}(x)\!=\!
    \begin{cases}
     \ell_{\L}^{-1}(\ell_{\N}(x))=\left(\frac{x}c\right)^{\frac1\alpha}&\!\!\! \text{if }x>1.\\
     \left(\frac{1}c\right)^{\frac1\alpha}&\!\!\! \text{if }x\le 1.
    \end{cases} \label{eq:excl}
\end{align}

\textbf{SINR}: 
The sum interference at the typical user is given as
\begin{align*}
I=\sum\nolimits_{\X_i\in\Phi, \X_i\ne \X_0}P_i=\sum\nolimits_{\X_i\in\Phi\setminus\{ \X_0\}}H_i \ell_{s_i}\left({X_i}\right).
\end{align*}%
We can split the $\Phi$ into two point process $\Phi_\L$ and $\Phi_\N$ denoting the sets of interfering LoS and NLoS BSs respectively. Hence, $I$ can be written as  
\begin{align*}
&I=I_\L+I_\N & \text{with } I_\v=\sum\nolimits_{\X_i\in\Phi_\v, \X_i\ne \X_0}H_i C_{\v}\dist{\X_i}^{-\alpha_{\v}}
\end{align*}
where  $I_\v$ is the interference from BSs of type $\v$. Hence, the SINR at the typical user is given as
\begin{align}
\SINR=\frac{S}{\noise+I},
\end{align}
where $\noise$ is the noise. 

We will use two performance metrics: the SINR coverage  $\Pc$ and  rate coverage $\Rc$ which are the complimentary CDFs of the SINR and the rate at the typical user and are given as
 \begin{align}
 \Pc(\SThres)=\prob{\SINR>\SThres},&&
	\Rc(\rho) = \prob{\Rate > \rho}.\label{eq:pcdef}
\end{align}
In the next section, we present the exact analysis of the 1D case.
 
 \section{1D Cellular Network}
 We will first consider a 1D mmWave cellular network such as one deployed on a straight street \ie the  space $\sE$ is the real line $\mathbb{R}=(-\infty, \infty)$. Such a deployment can be seen in dense urban environments, {\em e.g.} downtown of a city \cite{GhaMalKalGup2021} or on a highway with vehicular users.  In 1D, blockages can be seen as points on  $\mathbb{R}$. A link of length $r$ is LoS if there is no blockage in this line-segment. This occurs with probability $\probln{\L}(r)=e^{-\mu r}$ which means that $\beta=\mu$. A single blockage  shadows all BSs falling beyond its location. Therefore, from a user's perspective, if a BS is blocked, then all BSs falling behind it are also blocked. This results in a severe correlation among blocking events of links between a user and different BSs. Therefore, the independent blocking assumption is clearly invalid in this scenario.

Let the closest blockage on the right side of the typical user (\ie in $\mathbb{R}^+=(0,\infty)$) be at distance $\qr$. Similarly, $\ql$ denotes the distance of the closest blockage on the left side (\ie in $\mathbb{R}^-=(-\infty,0)$).  Both $\qr$ and $\ql$ have the same distribution with probability density function (PDF),
\begin{align}
    f_\Q(q) =  \mu \expS{- \mu q}, \quad q \ge 0
    \label{eq:distb}
\end{align}
\ie $\qr,\ql\sim\mathsf{Exp}(1/\mu)$. Note that locations of only these two blockages will decide for the blocking state of the link to every BS. In particular, all the BSs falling on the right side of the user with $|\X_i|>\qr$ will be NLoS and all the BSs falling on the right side of user with $|\X_i|<\qr$ will be LoS. Therefore, conditioned on $\ql$ and $\qr$,   $\Phi_\L$ (PPP of LoS BSs) is given as $ \Phi_\L=\Phi^+_\L \cup \Phi^-_\L$ where  
\begin{align}
& \Phi_\L^+=\{\X_i: \X_i\in\Phi, \X_i>0, \X_i<\qr\}, \text{
and }
\Phi_\L^-=\{\X_i: \X_i\in\Phi, \X_i<0, |\X_i|<\ql\}\end{align}
denotes right side and left side LoS BSs respectively. Similarly, 
 $\Phi_\N=\Phi^+_\N \cup \Phi^-_\N$ with 
\begin{align}
& \Phi_\N^+=\{\X_i: \X_i\in\Phi, \X_i>0, \X_i>\qr\}, \text{
and }
\Phi_\N^-=\{\X_i: \X_i\in\Phi, \X_i<0, |\X_i|>\ql\}.
\end{align}

To compute the SINR coverage probability of the typical user, we will need the distribution of the distance of the serving BS from the user.  This is derived next.

\subsection{Distance Distribution of LoS and NLoS Serving BS}

Let $E_{\L,\x}$ denote the event that the BS at the location $\x$ is LoS and serving the typical user. Similarly, $E_{\N,\x}$ denotes the event that the BS at the location $\x$ is the NLoS  serving BS.  We first provide the probabilities of these two events conditioned on  $\ql$ and $\qr$ in the following Lemma.

\begin{lemma}\label{lem:gs1DConditionedqq}
Conditioned on the locations of the closest blockages $\qr$ and $\ql$, the probability that the typical user in a 1D mmWave cellular network is associated with a LOS (and NLOS) BS located at $\x$ is given as
\begin{align}
&{g_\L(x | \qr, \ql)}  \isdefined \palmprobx{
        E_{\L,x}|\ql,\qr
         }{\x}
         \!
         =\!
    \begin{cases}
    \mathbbm{1}(x < q) \exp(-\lambda x)\exp\left(-\lambda \excNL{}(x) \right) &\!\! \text{if }q' \le  \excNL{}(x) \\
    \mathbbm{1}(x < q) \exp (-\lambda x) \exp \left(-\lambda q^{\prime}\right) &\!\! \text{if }\excNL{}(x)\le q^{\prime}<x \\
    \mathbbm{1}(x < q) \exp (-2 \lambda x) & \!\! \text{if }   x \le \ql
    \end{cases}\label{eq:cond_gL}
    \\
 &   g_\N(x|\qr,\ql)   \isdefined\! \palmprobx{        E_{\N,x}|\ql,\qr         }{\x}\!
 =\!\!
    \begin{cases}
    \mathbbm{1}(x \geq q) \exp(-2\lambda x) & \text{if }q' <  x \\
    \mathbbm{1}(x \geq q) \exp (-\lambda x) \exp \left(-\lambda q^{\prime}\right) & \text{if }x < q^{\prime} < \excLN{}(x) \\
    \mathbbm{1}(x \geq q) \exp (-\lambda x) \exp (-\lambda \excLN{}(x)) & \text{if }q' > \excLN{}(x)
    \end{cases}.
    \label{eq:cond_gN}
\end{align}
\end{lemma}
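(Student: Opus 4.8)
The plan is to exploit the independence of the BS process $\PPT$ and the blockage process $\PPR$ together with Slivnyak's theorem. Conditioned on the two nearest‑blockage distances $\qr,\ql$, the type of every BS becomes a deterministic function of its position: a BS on the same side as $\x$ (take $\x$ on the positive half‑line, with $\qr$ the near‑side and $\ql$ the far‑side blockage distance; the opposite placement follows by the exchange symmetry $\qr\leftrightarrow\ql$) at distance $y$ is LoS iff $y<\qr$, while a BS on the opposite side at distance $y$ is LoS iff $y<\ql$. Under the Palm measure $\palmprobx{\cdot}{\x}$ the remaining BSs still form a PPP of intensity $\lambda$, so each required probability collapses to a void probability of this PPP over an explicitly described ``forbidden'' interval, and the right and left half‑lines, being disjoint, contribute independently.

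First I would treat $g_\L$. The event $E_{\L,x}$ factors as \{the BS at $\x$ is LoS\} $\cap$ \{it is the strongest BS in average received power\}. Being LoS means no blockage lies in $(0,x)$, which given $\qr$ is exactly the indicator $\indside{x<\qr}$. Being the serving BS means $\ell_\L(x)\ge \ell_{s_i}(X_i)$ for every other BS; by monotonicity of $\ell_\L,\ell_\N$ and the definitions of the exclusion radii this is equivalent to having no LoS BS within distance $\excLL{}(x)=x$ and no NLoS BS within distance $\excNL{}(x)$. On the right half‑line every BS inside $(0,x)$ is LoS (since $x<\qr$), while the NLoS BSs there sit at distance $>\qr>x>\excNL{}(x)$; hence the forbidden set is simply $(0,x)$ with void probability $e^{-\lambda x}$. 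On the left half‑line the forbidden set is the union of the LoS part at distance below $x$, namely $(0,\min(x,\ql))$, with the NLoS part at distance below $\excNL{}(x)$, namely $(\ql,\excNL{}(x))$ when $\excNL{}(x)>\ql$; its total length is $\min(x,\ql)+(\excNL{}(x)-\ql)^+$. Multiplying the two void probabilities and the indicator, and using $\excNL{}(x)<x$ to order the thresholds, yields \eqref{eq:cond_gL}: the left length equals $\excNL{}(x)$ when $\ql\le\excNL{}(x)$, equals $\ql$ when $\excNL{}(x)\le\ql<x$, and equals $x$ when $x\le\ql$.

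The computation of $g_\N$ is entirely analogous with the roles of LoS and NLoS interchanged. The BS at $\x$ is NLoS iff a blockage precedes it on its side, i.e. $\indside{x\ge\qr}$, and the serving condition now forbids any NLoS BS within $\excNN{}(x)=x$ and any LoS BS within $\excLN{}(x)$, where $\excLN{}(x)>x$. On the right half‑line this forbids the LoS part $(0,\qr)$ (since $\excLN{}(x)>x>\qr$) together with the NLoS part $(\qr,x)$, whose union is again $(0,x)$ with void probability $e^{-\lambda x}$; on the left half‑line the forbidden length is $\min(\ql,\excLN{}(x))+(x-\ql)^+$, which produces the three cases of \eqref{eq:cond_gN} after using $\excLN{}(x)>x$ to order the thresholds.

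The main obstacle is the bookkeeping in the serving condition: one must translate ``largest average received power'' into the correct pair of per‑type exclusion radii, and then correctly intersect each radius with the LoS/NLoS partition induced by $\qr,\ql$ on the opposite half‑line, where the forbidden interval is a union of two pieces meeting at the blockage location. Getting the case boundaries right hinges on the inequalities $\excNL{}(x)<x$ and $\excLN{}(x)>x$, which decide whether the NLoS (resp.\ LoS) forbidden piece on the far side is empty and thereby collapse the general length formula $\min(x,\ql)+(\excNL{}(x)-\ql)^+$ (resp.\ $\min(\ql,\excLN{}(x))+(x-\ql)^+$) into the three stated regimes.
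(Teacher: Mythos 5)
Your proposal is correct and follows essentially the same route as the paper's proof in Appendix~\ref{appendix:proofth1}: factor $E_{s,\x}$ into the blocking-state indicator determined by $\qr$ and the per-type exclusion conditions, note that conditioned on $(\qr,\ql)$ each half-line's forbidden set is an explicit interval (or union meeting at the blockage), and evaluate the resulting void probabilities of the PPP via Slivnyak/PGFL, with the case split governed by $\excNL{}(x)<x$ and $\excLN{}(x)>x$. Your unified length formulas $\min(x,\ql)+(\excNL{}(x)-\ql)^+$ and $\min(\ql,\excLN{}(x))+(x-\ql)^+$ are just a compact restatement of the paper's $\min/\max$ integration limits.
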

\begin{IEEEproof}
See Appendix \ref{appendix:proofth1}
for the proof.
\end{IEEEproof}

We now derive the  distribution of the distance to the LoS serving BS in terms of  the joint probability  that the serving BS is at a distance larger than $r$ and is LoS. It is given in the following Theorem.

\begin{theorem}\label{thm:GLr1D}
The joint probability $G_{\L}(r)$ that the serving BS is outside the ball $\Ball(\origin,r)=[-r,r]$ and is LoS to the typical user in a 1D mmWave cellular network is  given as
\begin{align}
        &G_{\L}(r)=
        2\lambda \int_{r}^\infty g_\L(x)  \dd x
         \label{eq:GLR1D_th} \text{ with }\\
&g_\L(x)\isdefined \palmprobx{E_{\L,\x}}{\x}=\expects{\ql,\qr}{g_\L(x|\ql,\qr)}
\label{eq:gLx1D_middlestep1}
\\
&= e^{-\lambda\left(\excNL{}(x)+x\right)-\mu x} \left(1-e^{-\mu \excNL{}(x)}\right) + \frac{ \mu e^{-(\lambda + \mu) x}}{\lambda+\mu}  \left[e^{-(\lambda+\mu)\excNL{}(x)}-e^{-(\lambda+\mu) x}\right] 
	+  e^{-2 (\lambda + \mu) x}.
	\label{eq:gLx}
\end{align}

\end{theorem}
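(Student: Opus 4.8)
The plan is to establish the result in two stages: first the Campbell--Mecke representation $G_{\L}(r)=2\lambda\int_r^\infty g_\L(x)\,\dd x$ asserted in \eqref{eq:GLR1D_th}, and then the closed form \eqref{eq:gLx} for $g_\L(x)$, obtained by averaging the conditional expression \eqref{eq:cond_gL} of Lemma~\ref{lem:gs1DConditionedqq} over the two blockage distances $\qr$ and $\ql$.

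For the first stage, the key observation is that the serving BS is unique, so the random sum $\sum_{\X_i\in\Phi}\mathbbm{1}(|\X_i|>r)\,\mathbbm{1}(E_{\L,\X_i})$ takes values in $\{0,1\}$ and equals $1$ exactly when the (LoS) serving BS lies outside $\Ball(\origin,r)=[-r,r]$. Hence $G_\L(r)$ is precisely the expectation of this sum. Applying the Mecke/Campbell formula for the PPP $\Phi$ of intensity $\lambda$ on $\mathbb{R}$ converts the expected count into $\int_\mathbb{R}\lambda\,\mathbbm{1}(|\x|>r)\,\palmprobx{E_{\L,\x}}{\x}\,\dd\x$. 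Since $g_\L(x)\isdefined\palmprobx{E_{\L,\x}}{\x}$ depends on $\x$ only through $x=|\x|$ by the left--right symmetry of the model (as reflected in \eqref{eq:gLx1D_middlestep1}), the two half-lines contribute equally and the integral collapses to $2\lambda\int_r^\infty g_\L(x)\,\dd x$, which is \eqref{eq:GLR1D_th}.

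For the second stage, I would uncondition \eqref{eq:cond_gL} using that $\qr$ and $\ql$ are independent, each with the $\mathsf{Exp}(\mu)$ density \eqref{eq:distb}. Every branch of \eqref{eq:cond_gL} carries the factor $\mathbbm{1}(x<\qr)$, so integrating out $\qr$ contributes $\int_x^\infty\mu e^{-\mu q}\,\dd q=e^{-\mu x}$ and cleanly decouples from the remaining average. The average over $\ql=q'$ then splits at the two breakpoints $\excNL{}(x)$ and $x$ exactly as in the three cases of \eqref{eq:cond_gL}: the range $q'\le\excNL{}(x)$ yields the first summand of \eqref{eq:gLx}, the range $\excNL{}(x)\le q'<x$ yields the second (through $\int_{\excNL{}(x)}^x\mu e^{-(\lambda+\mu)q'}\,\dd q'$), and the range $q'\ge x$ yields the third. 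Multiplying each contribution by the common $e^{-\mu x}$ factor and collecting terms reproduces \eqref{eq:gLx}.

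The computation in the second stage is routine exponential integration; the only genuine conceptual point is the first stage, namely the Palm-theoretic argument that uniqueness of the serving BS lets us equate the probability $G_\L(r)$ with an expected point count, so that the Mecke formula converts it into the stated integral. I therefore expect that to be the step deserving the most care, even though it is standard in stochastic-geometry analyses.
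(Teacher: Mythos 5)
Your proposal is correct and follows essentially the same route as the paper: the paper likewise writes $G_\L(r)$ as the expectation of $\sum_{\X\in\Phi}\indside{E_{\L,\X}}\indside{\X\notin\Ball(\origin,r)}$, applies the Campbell--Mecke theorem and left--right symmetry to obtain $2\lambda\int_r^\infty g_\L(x)\,\dd x$, and then derives \eqref{eq:gLx} by unconditioning \eqref{eq:cond_gL} over the $\mathsf{Exp}(\mu)$-distributed $\qr$ and $\ql$ via the law of total probability. Your explicit remark that uniqueness of the serving BS makes the indicator sum $\{0,1\}$-valued, and your term-by-term integration over the three $q'$ ranges, are exactly the (implicit and stated) steps of the paper's argument.
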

\begin{IEEEproof}
See Appendix \ref{app:thm:GLr1D}.
\end{IEEEproof}

Similarly, for the distribution of the distance to the NLoS serving BS, we have the following Theorem.

\begin{theorem}\label{thm:GNr1D}
 The joint probability  $G_{\N}(r)$ that the serving BS is outside $\Ball(\origin,r)$ and is NLoS to the typical user in a 1D mmWave cellular network is  given as
\begin{align}
        G_{\N}(r)&=
        2\lambda \int_{r}^\infty g_\N(x)  \dd x \text{ with } \\
        g_\N(x)  & \isdefined \palmprobx{E_{\N,\x}}{\x}=
        \expects{\ql,\qr}{g_\N(x|\ql,\qr)}    \label{eq:GNR1D_1}\\
        &= \frac{\mu e^{-\lambda x}}{\lambda+\mu}\left(1-e^{-\mu x}\right)\left[e^{-(\lambda+\mu) x}-e^{-(\lambda+\mu)\excLN{}(x)}\right]  \nonumber \\
& \qquad+e^{- \lambda x}e^{-(\lambda+\mu)\excLN{}(x)}\left(1-e^{-\mu x}\right) +  e^{-2 \lambda x}\left(1-e^{-\mu x}\right)^{2}.
         \label{eq:gNx}
        \end{align}
\end{theorem}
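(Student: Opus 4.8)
The plan is to mirror the two-stage argument behind Theorem~\ref{thm:GLr1D}, first reducing $G_\N(r)$ to a one-sided integral of the marginal association density $g_\N(x)=\palmprobx{E_{\N,\x}}{\x}$, and then evaluating that density by averaging the conditional expression of Lemma~\ref{lem:gs1DConditionedqq} over $\qr$ and $\ql$.

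For the first stage, I would note that the typical user has a single serving BS, so the event ``the serving BS is NLoS and lies outside $\Ball(\origin,r)$'' is indicated by $\sum_{\X_i\in\Phi}\mathbbm{1}(|\X_i|>r)\,\mathbbm{1}(E_{\N,\X_i})$, a sum containing at most one nonzero term. Taking expectations and applying the Campbell--Mecke formula for the PPP $\Phi$ of intensity $\lambda$ gives $G_\N(r)=\lambda\int_{|\x|>r}\palmprobx{E_{\N,\x}}{\x}\,\dd\x$. Because the blockage process $\Psi$ (and hence $\qr,\ql$) is independent of $\Phi$, inserting the Palm point at $\x$ does not disturb the blockage geometry, so $\palmprobx{E_{\N,\x}}{\x}=g_\N(|\x|)$; moreover the tower property over $(\qr,\ql)$ together with Lemma~\ref{lem:gs1DConditionedqq} yields $g_\N(x)=\expects{\ql,\qr}{g_\N(x|\ql,\qr)}$. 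Finally, the configuration is invariant under the reflection $\x\mapsto-\x$ combined with the swap $\qr\leftrightarrow\ql$ of the i.i.d.\ blockage distances, so the two half-lines contribute equally and $\lambda\int_{|\x|>r}\dd\x$ collapses to $2\lambda\int_r^\infty\dd x$, giving the stated form of $G_\N(r)$.

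For the second stage, I would integrate the three-case expression in \eqref{eq:cond_gN} against the independent exponential densities $f_\Q(\qr)=\mu e^{-\mu\qr}$ and $f_\Q(\ql)=\mu e^{-\mu\ql}$. The structural simplification is that $g_\N(x|\qr,\ql)$ depends on $\qr$ only through the factor $\mathbbm{1}(x\ge\qr)$: for a right-side BS at distance $x$ the exclusion condition on the right half-line is simply ``no BS in $(0,x)$'' irrespective of where $\qr$ falls inside $(0,x)$. Hence the $\qr$-average factors out as $\prob{\qr\le x}=1-e^{-\mu x}$, and what remains is a single integral over $\ql$ split at the break-points $x$ and $\excLN{}(x)$:
\begin{align*}
g_\N(x)&=(1-e^{-\mu x})\left[\, e^{-2\lambda x}\!\int_0^x\! \mu e^{-\mu\ql}\,\dd\ql + e^{-\lambda x}\!\int_x^{\excLN{}(x)}\!\! \mu e^{-(\lambda+\mu)\ql}\,\dd\ql \right.\\
&\qquad\qquad\qquad \left. +\, e^{-\lambda x}e^{-\lambda\excLN{}(x)}\!\int_{\excLN{}(x)}^\infty\!\! \mu e^{-\mu\ql}\,\dd\ql \,\right].
\end{align*}
The three integrals are elementary; the middle one produces the prefactor $\mu/(\lambda+\mu)$ and the bracketed difference $e^{-(\lambda+\mu)x}-e^{-(\lambda+\mu)\excLN{}(x)}$, and distributing the outer factor $(1-e^{-\mu x})$ reproduces the three summands of \eqref{eq:gNx} verbatim.

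The conceptual work sits entirely in the first stage (the Campbell--Mecke reduction and the symmetry argument for the factor $2\lambda$) and in recognizing the factorization of the $\qr$-dependence; the remaining integration is routine. The one point needing care is that the three regimes of Lemma~\ref{lem:gs1DConditionedqq} tile $[0,\infty)$ in $\ql$ without gaps or overlap, which relies on the ordering $x<\excLN{}(x)$ (stated in the system model) and is exactly what makes the integration limits $[0,x]$, $[x,\excLN{}(x)]$, and $[\excLN{}(x),\infty)$ consistent.
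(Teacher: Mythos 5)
Your proposal is correct and follows essentially the same route the paper takes: the Campbell--Mecke reduction with the reflection/swap symmetry yielding the factor $2\lambda$ is exactly the argument of Appendix~\ref{app:thm:GLr1D} for the LoS case, and your unconditioning of \eqref{eq:cond_gN} over the exponential $\qr,\ql$ (with the $\qr$-dependence factoring out as $1-e^{-\mu x}$ and the $\ql$-integral split at $x$ and $\excLN{}(x)$) reproduces \eqref{eq:gNx} exactly. The observation that the three regimes tile $[0,\infty)$ because $\excLN{}(x)>x$ is a correct and worthwhile point of care.
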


\begin{cor}
For the special case BPLP model, $g_{\L}(x)$ and $g_{\N}(x)$ are given as
\begin{align}
&g_\L(x) =
\begin{cases}
 \frac{ \mu}{\lambda+\mu} e^{-(\lambda + \mu) \x} \left[e^{-(\lambda+\mu)c \x^{\alpha}}-e^{-(\lambda+\mu) \x}\right] \\
 \hspace{1.01in}+ e^{-\lambda\left(c \x^{\alpha}+\x\right)} e^{-\mu \x} \cdot\left(1-e^{-\mu c \x^{\alpha}}\right) +  e^{-2 (\lambda + \mu) \x}& \iftext{if } x > {c}^{-\fracS{1}{\alpha}}\\
  \frac{ \mu}{\lambda+\mu} e^{-(\lambda + \mu) \x} \left[1-e^{-(\lambda+\mu) \x}\right] + e^{-2 (\lambda + \mu) \x}&\iftext{if } x \le {c}^{-\fracS{1}{\alpha}}
 \end{cases}\\
&g_{\N}(x) = 
\begin{cases}
\frac{\mu e^{-\lambda x}}{\lambda+\mu}\left(1-e^{-\mu x}\right)\left[e^{-(\lambda+\mu) x}-e^{-(\lambda+\mu)\left(\frac{x}{c}\right)^{\frac{1}{\alpha}}}\right] + \nonumber \\ 
 \hspace{.5in} +e^{- \lambda x}e^{-(\lambda+\mu)\left(\frac{x}{c}\right)^{\frac{1}{\alpha}}}\left(1-e^{-\mu x}\right) +  e^{-2 \lambda x}\left(1-e^{-\mu x}\right)^{2},&\iftext{if } x>1 \\
 \frac{\mu e^{-\lambda x}}{\lambda+\mu}\left(1-e^{-\mu x}\right)\left[e^{-(\lambda+\mu) x}-e^{-(\lambda+\mu){c^{-\fracS{1}{\alpha}}}}\right] + \nonumber \\
  \hspace{.5in}+  e^{- \lambda x}e^{-(\lambda+\mu)
 c^{-1/\alpha}
 }\left(1-e^{-\mu x}\right) +  e^{-2 \lambda x}\left(1-e^{-\mu x}\right)^{2},& \iftext{if } x \le 1
\end{cases}.
\end{align}
\end{cor}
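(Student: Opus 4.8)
The plan is to obtain the corollary by direct substitution of the BPLP-specific exclusion radii from \eqref{eq:excl} into the general expressions for $g_\L(x)$ and $g_\N(x)$ established in Theorems \ref{thm:GLr1D} and \ref{thm:GNr1D}. No new probabilistic argument is needed: those theorems already express $g_\L$ and $g_\N$ as explicit closed forms in which the path-loss model enters \emph{only} through the two exclusion functions $\excNL{}(x)$ and $\excLN{}(x)$. Hence the corollary reduces to evaluating these functions for the BPLP model and simplifying the resulting algebra.

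For $g_\L(x)$ I would start from \eqref{eq:gLx} and use that $\excNL{}(x)$ in \eqref{eq:excl} has a single breakpoint at $x = c^{-1/\alpha}$. In the regime $x > c^{-1/\alpha}$ I substitute $\excNL{}(x) = c x^{\alpha}$ termwise: the factor $(1 - e^{-\mu \excNL{}(x)})$ becomes $(1 - e^{-\mu c x^{\alpha}})$ and $e^{-(\lambda+\mu)\excNL{}(x)}$ becomes $e^{-(\lambda+\mu)c x^{\alpha}}$, reproducing the three summands of the first case. In the complementary regime $x \le c^{-1/\alpha}$ I substitute $\excNL{}(x) = 0$; the key simplification is that the first summand of \eqref{eq:gLx} carries the factor $(1 - e^{-\mu \excNL{}(x)}) = (1 - e^{0}) = 0$ and therefore vanishes, while in the second summand $e^{-(\lambda+\mu)\excNL{}(x)} = 1$, leaving exactly the two-term expression of the second case.

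For $g_\N(x)$ I would proceed analogously from \eqref{eq:gNx}, now invoking $\excLN{}(x)$ from \eqref{eq:excl}, whose breakpoint sits at $x = 1$. For $x > 1$ I substitute $\excLN{}(x) = (x/c)^{1/\alpha}$ into each occurrence of $e^{-(\lambda+\mu)\excLN{}(x)}$, yielding the first case; for $x \le 1$ I substitute the constant value $\excLN{}(x) = c^{-1/\alpha}$, yielding the second case. Unlike the $g_\L$ situation, no summand drops out here, so both cases retain all three terms of \eqref{eq:gNx}.

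The only care required is bookkeeping: one must match the range of $x$ to the correct branch of each piecewise exclusion function and, for $g_\L$, recognize the collapse of the first summand when $\excNL{}(x)=0$. I expect this matching of breakpoints to be the sole mild subtlety; the remaining steps are routine algebraic rewrites (including a harmless reordering of the three summands to match the displayed form).
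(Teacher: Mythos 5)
Your proposal is correct and is exactly the route the paper intends (it omits an explicit proof precisely because the corollary follows by substituting the BPLP exclusion functions from \eqref{eq:excl} into \eqref{eq:gLx} and \eqref{eq:gNx}). Your case bookkeeping at the breakpoints $x=c^{-1/\alpha}$ and $x=1$, and your observation that the first summand of \eqref{eq:gLx} vanishes when $\excNL{}(x)=0$, account for all the differences between the two branches of each displayed formula.
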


\begin{remark}
Note that under the independent blocking assumption, $g_\L(x)$ and $g_\N(x)$ are given by
\begin{align}
\overline{g_{\L}}(x) &= e^{-\mu x}  e^{-\frac{2 \lambda}{\mu}  \left(1 - e^{-\mu x}\right)}  e^{\frac{2 \lambda}{\mu}  \left(1 - e^{-\mu \excNL{}(x)}\right)}  e^{-2 \lambda \excNL{}(x)},\\
\overline{g_{\N}}(x) &= \left(1 - e^{-\mu x}\right)  e^{-\frac{2 \lambda}{\mu}  \left(1 - e^{-\mu \excLN{}(x)}\right)}  e^{\frac{2 \lambda}{\mu}  \left(1 - e^{-\mu x}\right)}  e^{-2 \lambda x}.
\end{align}
\end{remark}

\subsection{Association Probability}
Let  $E_s$ denote the event that the typical user is served by a BS of a particular type $s$. The association probability $A_s$ denotes the probability of this event and is given in the following Lemma.
\begin{lemma}
The LoS and NLoS association probabilities of the typical user in a 1D mmWave cellular network are  
\begin{align}
A_\L=\prob{E_\L}=G_\L(0)=&2\lambda \int_0^{\infty} \probserv{\L}(x) \dd x,\\
A_\N=\prob{E_\N}=G_\N(0)=&2\lambda \int_0^{\infty} \probserv{\N}(x)  \dd x
\end{align}
with $\probserv{\L}(x)$ and $\probserv{\N}(x)$ given in \eqref{eq:gLx} and \eqref{eq:gNx}.
\end{lemma}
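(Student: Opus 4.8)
The plan is to obtain both association probabilities directly as the $r=0$ specialization of the joint distance-distribution results already established in Theorems~\ref{thm:GLr1D} and~\ref{thm:GNr1D}. By definition, $E_\L$ is the event that the typical user's tagged BS is LoS, so $A_\L=\prob{E_\L}$ is precisely the probability that the tagged BS is LoS with \emph{no} constraint on its distance; meanwhile $G_\L(r)$ is the joint probability that the tagged BS is LoS \emph{and} lies outside the ball $\Ball(\origin,r)=[-r,r]$.

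First I would establish the identity $G_\L(0)=A_\L$. Setting $r=0$ collapses the exclusion region $[-r,r]$ to the single point $\{\origin\}$. Since the tagged BS is a point of the homogeneous PPP $\Phi$, and such a process almost surely places no atom at the origin, the tagged BS lies outside $\{\origin\}$ with probability one. Hence the distance constraint in $G_\L(0)$ is vacuous, and $G_\L(0)$ reduces to the unconstrained event that the tagged BS is LoS, i.e. $\prob{E_\L}=A_\L$. The map $r\mapsto G_\L(r)$ is continuous, so the value at $r=0$ and the limit $r\to 0^+$ coincide and no positive-probability configuration is lost.

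Next I would simply evaluate the integral form of Theorem~\ref{thm:GLr1D} at $r=0$: from $G_\L(r)=2\lambda\int_r^\infty \probserv{\L}(x)\,\dd x$ we immediately get $A_\L=G_\L(0)=2\lambda\int_0^\infty \probserv{\L}(x)\,\dd x$, with $\probserv{\L}$ as in \eqref{eq:gLx}. Running the identical argument with $E_\N$, $G_\N$, and Theorem~\ref{thm:GNr1D} gives $A_\N=G_\N(0)=2\lambda\int_0^\infty \probserv{\N}(x)\,\dd x$, with $\probserv{\N}$ as in \eqref{eq:gNx}.

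There is no computational obstacle here; the content is purely conceptual, and the single point requiring care is the identification $G_\L(0)=A_\L$ (and its NLoS analogue)---namely verifying that removing the distance constraint recovers the full association event, which rests on the null-at-origin property of the PPP. As an optional cross-check I could re-derive the prefactor $2\lambda$ from first principles via the Campbell--Mecke formula: the expected number of LoS tagged BSs equals $\lambda\int_{\mathbb{R}}\palmprobx{E_{\L,\x}}{\x}\,\dd x$, and since there is at most one tagged BS this expectation equals $A_\L$; using the symmetry $\probserv{\L}(x)=\probserv{\L}(|x|)$ to split the line integral into twice the half-line integral then reproduces $2\lambda\int_0^\infty \probserv{\L}(x)\,\dd x$ without invoking $G_\L$ at all.
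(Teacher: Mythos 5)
Your proposal is correct and follows exactly the route the paper intends: the lemma is stated as an immediate specialization of Theorems~\ref{thm:GLr1D} and~\ref{thm:GNr1D} at $r=0$, where the exclusion constraint becomes vacuous and $G_s(0)=\prob{E_s}=A_s$. Your extra care about the null-at-origin property of the PPP and the optional Campbell--Mecke cross-check are sound but not needed beyond what the paper already implies.
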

\begin{cor}

For the special case \LAOP model, the LoS association probability  of the typical user in a 1D mmWave cellular network is given as 
\begin{align}
A_\L=\frac{{\frac\lambda\mu}{\left(2+\frac\lambda\mu\right)}}{\left(1+\lambda/\mu\right)^2}\label{eq:1DAL}
\end{align}
whereas under the independent blocking assumption, it is given as
\begin{align}
\overline{A_\L}=1-\expU{-2\lambda/\mu}.
\end{align}
\end{cor}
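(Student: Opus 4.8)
The plan is to specialize the association-probability identity from the preceding Lemma, namely $A_\L = G_\L(0) = 2\lambda\int_0^\infty g_\L(x)\,\dd x$, to the LAP model and to evaluate the resulting elementary integral. The single nontrivial step is to determine what the NLoS exclusion radius $\excNL{}(x)$ becomes under the LAP assumption $\ell_\N \equiv 0$. By definition $\excNL{}(x) = \min_y\{y : \ell_\N(y) \le \ell_\L(x)\}$; since $\ell_\N(y) = 0 \le \ell_\L(x)$ holds for every $y \ge 0$, the minimizing distance is $y=0$, so $\excNL{}(x) = 0$ for all $x$. This is the intuitive statement that once NLoS BSs deliver no power they impose no exclusion on the LoS serving region.

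With this in hand I would substitute $\excNL{}(x)=0$ into the closed form for $g_\L(x)$ in Theorem~\ref{thm:GLr1D}. The first summand carries a factor $(1-e^{-\mu \excNL{}(x)})$ that vanishes at $\excNL{}(x)=0$, while the bracket in the second summand reduces to $1-e^{-(\lambda+\mu)x}$. Collecting the two $e^{-2(\lambda+\mu)x}$ contributions using $(\lambda+\mu)-\mu=\lambda$ leaves the clean form
\begin{align*}
g_\L(x) = \frac{\mu}{\lambda+\mu}\,e^{-(\lambda+\mu)x} + \frac{\lambda}{\lambda+\mu}\,e^{-2(\lambda+\mu)x}.
\end{align*}
Integrating term by term over $(0,\infty)$ gives $\int_0^\infty g_\L(x)\,\dd x = \frac{\mu}{(\lambda+\mu)^2} + \frac{\lambda}{2(\lambda+\mu)^2} = \frac{2\mu+\lambda}{2(\lambda+\mu)^2}$, so that $A_\L = 2\lambda\int_0^\infty g_\L(x)\,\dd x = \frac{\lambda(2\mu+\lambda)}{(\lambda+\mu)^2}$. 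Dividing numerator and denominator by $\mu^2$ recasts this as the stated expression in $\lambda/\mu$; as a cross-check it equals $A_\L = 1-(1+\lambda/\mu)^{-2}$.

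For the independent-blocking value $\overline{A_\L}$ I would perform the same specialization on the expression for $\overline{g_\L}(x)$ given in the Remark. Setting $\excNL{}(x)=0$ makes both $e^{\frac{2\lambda}{\mu}(1-e^{-\mu \excNL{}(x)})}$ and $e^{-2\lambda \excNL{}(x)}$ equal to $1$, leaving $\overline{g_\L}(x) = e^{-\mu x}\,e^{-\frac{2\lambda}{\mu}(1-e^{-\mu x})}$. Then $\overline{A_\L} = 2\lambda\int_0^\infty \overline{g_\L}(x)\,\dd x$ is evaluated by the substitution $u = e^{-\mu x}$ (so that $e^{-\mu x}\,\dd x = -\mu^{-1}\,\dd u$), which turns the integral into $\frac{2\lambda}{\mu}\int_0^1 e^{-\frac{2\lambda}{\mu}(1-u)}\,\dd u = 1 - e^{-2\lambda/\mu}$.

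I do not anticipate a genuine obstacle: once $\excNL{}(x)=0$ is identified, every remaining step is a one-line exponential integral. The only point requiring care is the piecewise definition of $g_\L(x\mid \ql,\qr)$ in Lemma~\ref{lem:gs1DConditionedqq}, whose first branch $\ql \le \excNL{}(x)$ degenerates to the measure-zero event $\ql=0$; but this degeneracy is already absorbed into the averaged $g_\L(x)$ of Theorem~\ref{thm:GLr1D}, so working directly from that closed form sidesteps it entirely.
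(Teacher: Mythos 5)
Your proof is correct and follows exactly the route the paper implies (the corollary is stated without an explicit proof): identify $\excNL{}(x)=0$ under the LAP model, substitute into the closed form for $g_\L(x)$ from Theorem~\ref{thm:GLr1D} and into $\overline{g_\L}(x)$ from the Remark, and integrate via $2\lambda\int_0^\infty(\cdot)\,\dd x$. All intermediate expressions check out, including the simplification to $\frac{\mu}{\lambda+\mu}e^{-(\lambda+\mu)x}+\frac{\lambda}{\lambda+\mu}e^{-2(\lambda+\mu)x}$ and the cross-check $A_\L=1-(1+\lambda/\mu)^{-2}$, which the paper itself records.
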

Note that these two expressions are very different from each other which confirms the importance of including blockage correlation in the analysis. 
Note that \eqref{eq:1DAL} can be written as
\begin{align}
A_\L=1-\frac{1}{\left(1+\lambda/\mu\right)^2}\le \overline{A_\L}
\end{align}
which indicates that independent blocking assumption overestimates the LoS association probability. 
We will present a quantitative comparison in the numerical section. 

\subsection{Interference and SINR Coverage}
The sum interference $I$ at the typical user depends on the distance and the blocking state of the serving BS, and the locations $\qr$ and $\ql$ of the closest blockages. Therefore, we  derive the  LT of the interference conditioned on these parameters first.  
\begin{lemma}\label{lem:LTI1D}
Conditioned on the event $E_{s,\x}$, and the locations $\qr$ and $\ql$ of the closest blockages, the LT of the interference at the typical user in a 1D mmWave cellular network is given as
\begin{align}
   \mathcal{L}_{I| E_{\L,\x},\ql,\qr}(s)&= \exp\left( -\lambda \left( \int_{\qr}^{\infty} 
   \frac{1}
   {
   1 + {(s \ell_{\N}(y))}^{-1}
   } 
   \dd y + 
   \int_{\excNL{}(x)}^{\infty}
   \frac{\indside{y>\ql}}{
   1 + {(s \ell_{\N}(y))}^{-1}
   } 
    \dd y\nonumber \right. \right.\nonumber\\
    & \left. \left. \qquad + \int_{x}^{\qr} 
    \frac{1}{
   1 + {(s \ell_{\L}(y))}^{-1}
   } 
     \dd y + \int_{x}^{\infty}
      \frac{\indside{y<\ql}}{
   	1 + {(s \ell_{\L}(y))}^{-1}
   	} 
    \dd y
      \right)
       \right),\label{eq:1DLTIL}\\
   \mathcal{L}_{I| E_{\N,\x},\ql,\qr}(s)&= 
   \exp \left( -\lambda  \left( \int_{x}^{\infty}
      \frac{1}{
   1 + {(s \ell_{\N}(y))}^{-1}}
   \dd y
   + 
   \int_{x}^{\infty} 
      \frac{\indside{y>\ql}\dd y}{
   1 + {(s \ell_{\N}(y))}^{-1}}
    \dd y\nonumber \right. \right.\nonumber\\
    & \left. \left. \quad \qquad + 
    \int_{\excLN{}(x)}^{\infty} 
    \frac{\indside{y<\ql}}
    {1+(s \ell_{\L}(y))^{-1}}
     \right) \right).\label{eq:1DLTIN}
\end{align}
\end{lemma}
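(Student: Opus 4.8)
The plan is to reduce the conditional interference to a sum over a Poisson point process on a deterministic region and then apply the probability generating functional (PGFL) together with the Laplace transform of the exponential fading. First I would invoke Slivnyak's theorem: conditioning on a serving BS at $\x$ via the Palm measure leaves the remaining BSs distributed as the original PPP $\Phi$ of intensity $\lambda$. The event $E_{s,\x}$ further requires this tagged BS to deliver the largest average received power, i.e.\ no other BS is stronger. Given $\qr$ and $\ql$, the blocking type of every BS is a \emph{deterministic} function of its position (LoS iff its distance is below the closest blockage on its side), so the ``stronger than tagged'' condition becomes the statement that no interferer lies inside a deterministic exclusion set. Since a PPP conditioned to be empty on a region remains a PPP of the same intensity on the complement (complete independence over disjoint sets), the interference $I$ equals $\sum_{y\in\Phi\cap A}H(y)\,\ell_{s(y)}(|y|)$ over the PPP restricted to the allowed region $A$, the fadings $H(y)\sim\mathsf{Exp}(1)$ being i.i.d.

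Next I would identify $A$ explicitly. Taking $\x$ on the right at distance $x$ (the left case follows by the $\qr\leftrightarrow\ql$ symmetry), the deterministic partition puts LoS BSs in $(0,\qr)$ and NLoS BSs in $(\qr,\infty)$ on the right, and LoS in $(-\ql,0)$, NLoS in $(-\infty,-\ql)$ on the left. Intersecting with the complements of the exclusion balls, whose radii depend on the tagged type, gives $A$. For a LoS tagged BS the radii are $\excLL{}(x)=x$ for LoS interferers and $\excNL{}(x)$ for NLoS interferers; using $\excNL{}(x)<x<\qr$, the surviving pieces are right-LoS on $(x,\qr)$, right-NLoS on $(\qr,\infty)$, left-LoS on $(x,\ql)$, and left-NLoS on $(\max(\ql,\excNL{}(x)),\infty)$. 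For a NLoS tagged BS the radii are $\excLN{}(x)$ for LoS and $\excNN{}(x)=x$ for NLoS; using $\excLN{}(x)>x>\qr$, \emph{all} right-LoS BSs are excluded (which is why the LoS-from-right integral is absent in \eqref{eq:1DLTIN}), while the surviving pieces are right- and left-NLoS on $(x,\infty)$ and $(\max(\ql,x),\infty)$, and left-LoS on $(\excLN{}(x),\ql)$.

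I would then apply the PGFL. For a PPP of intensity $\lambda$ on $A$ with i.i.d.\ marks,
\begin{align*}
\mathcal{L}_{I}(s)=\exp\!\left(-\lambda\int_A\big(1-\E_H[e^{-sH\ell_{s(y)}(|y|)}]\big)\,\dd y\right),
\end{align*}
and for $H\sim\mathsf{Exp}(1)$ one has $\E_H[e^{-sH\ell}]=(1+s\ell)^{-1}$, so the integrand becomes $1-(1+s\ell)^{-1}=\big(1+(s\ell)^{-1}\big)^{-1}$, exactly the kernel in the statement. Writing each left-side contribution with the indicator $\indside{y<\ql}$ for LoS and $\indside{y>\ql}$ for NLoS (so that a single lower limit encodes both the exclusion radius and the one-sided LoS/NLoS boundary) collapses the surviving pieces into the four integrals of \eqref{eq:1DLTIL} and the three of \eqref{eq:1DLTIN}.

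The main obstacle is the region bookkeeping in the second step: correctly combining exclusion radii that differ with both the tagged type and the interferer type, and verifying the orderings $\excNL{}(x)<x$ and $\excLN{}(x)>x$ that make certain regions collapse and produce the structural asymmetry between the LoS and NLoS formulas. A secondary point to justify carefully is that conditioning on the tagged BS being the strongest is exactly an emptiness condition on the PPP (so independence over disjoint regions applies), and that the $\qr\leftrightarrow\ql$ symmetry lets the right-sided computation cover the general location $\x$.
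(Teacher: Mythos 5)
Your proposal is correct and follows essentially the same route as the paper's Appendix proof: split the interferers into right/left LoS/NLoS sub-processes (your region $A$ is exactly the union of the paper's four integration domains), apply the Laplace functional of the marked PPP with the $\mathsf{Exp}(1)$ MGF to obtain the kernel $\bigl(1+(s\ell)^{-1}\bigr)^{-1}$, and use the orderings $\excNL{}(x)<x<\qr$ and $\excLN{}(x)>x>\qr$ to collapse the regions, including the observation that no right-side LoS interferers survive under NLoS association. Your explicit justification via Slivnyak and the emptiness-conditioning argument is a point the paper leaves implicit, but the substance is identical.
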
 
\begin{IEEEproof}
See Appendix \ref{appendix:proof-interference} for the proof.
\end{IEEEproof}

Equipped with the distribution of the distance to the serving BS and LT of the interference, we now give the expressions for the SINR coverage   in the following Theorem.
\begin{theorem}\label{thm:Pc1D}
The SINR coverage  of the typical user in  a 1D mmWave network is given by
\begin{align}
\Pc(\SThres) &=
2 \lambda \int_0^{\infty} \int_{x}^{\infty} \int_0^{\infty}  
	\exp\left( - \frac{\SThres  \noise}{ \ell_{\L}(x)} 
	\right)
	\kappa_\L(\SThres,x,\ql,\qr)\ 
	e^{-\mu\ql-\mu\qr}\ 
	g_\L(x \mid \ql,\qr) \ 
	 \dd \ql  \dd \qr  
	 \dd x  \nonumber\\
&+2 \lambda \int_0^{\infty} \int_{0}^{x} \int_0^{\infty}  
	\exp\left( - \frac{\SThres  \noise}{ \ell_{\N}(x)} 
	\right)
	\kappa_\N(\SThres,x,\ql,\qr)
	e^{-\mu\ql-\mu\qr}
	g_\N(x \mid \ql,\qr) 
	 \dd \ql  \dd \qr  
	 \dd x 
\end{align}
where $g_\L(\cdot)$ and $g_\N(\cdot)$ are given in \eqref{eq:cond_gL} and \eqref{eq:cond_gN} respectively, and
\begin{align*}
 \kappa_\L(\SThres,x,\ql,\qr)=
   \mathcal{L}_{I| E_{\L,\x},\ql,\qr}\left( \frac{\SThres}{ \ell_{\L}(x) } \right)
   &= \exp\left( -\lambda \left( \int_{\qr}^{\infty} 
   \frac{\dd y}
   {
   1 + {{(\SThres)}^{-1}\frac{\ell_{\L}(x)}{\ell_{\N}(y)}
   }} 
    \nonumber \right. \right.\\
    &\hspace{-2in} \left. \left.
    + 
   \int_{\excNL{}(x)}^{\infty}
   \frac{\indside{y>\ql} \dd y}{
   1 + {{(\SThres)}^{-1}\frac{\ell_{\L}(x)}{\ell_{\N}(y)}}
   } 
    + \int_{x}^{\qr} 
    \frac{
    \dd y}{
   1 + {{(\SThres)}^{-1}
   \frac{\ell_{\L}(x)}{\ell_{\L}(y)}}
   } 
     + \int_{x}^{\infty}
      \frac{\indside{y<\ql}\dd y}{
   	1 + {{(\SThres)}^{-1}
		\frac{\ell_{\L}(x)}{\ell_{\L}(y)}}
   	} 
      \right)
       \right),\\\\
       \kappa_\N(\SThres,x,\ql,\qr)=
        \mathcal{L}_{I| E_{\N,\x},\ql,\qr}\left( \frac{\SThres}{ \ell_{\N}(x)} \right)&= 
   \exp \left( -\lambda  \left( \int_{x}^{\infty}
      \frac{\dd y}{
   1 + {
   {(\SThres)}^{-1}
   \frac{\ell_{\N}(x)}{\ell_{\N}(y)}}}
   + 
   \int_{x}^{\infty} 
      \frac{ \indside{y>\ql}\dd y}{
   1 + 
   {(\SThres)}^{-1}
   \frac{\ell_{\N}(x)}{\ell_{\N}(y)}
   }
   \nonumber \right. \right.\\
    & \left. \left. \quad \qquad + 
    \int_{\excLN{}(x)}^{\infty} 
    \frac{\indside{y<\ql} \dd y}
    {1+
     {(\SThres)}^{-1}
   \frac{\ell_{\N}(x)}{\ell_{\L}(y)}
    }
    \right) \right).
\end{align*}
\end{theorem}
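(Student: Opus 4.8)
The plan is to assemble the statement from the two conditional ingredients already established: the conditional association probabilities $g_\L,g_\N$ of Lemma~\ref{lem:gs1DConditionedqq} and the conditional interference Laplace transforms of Lemma~\ref{lem:LTI1D}. First I would split coverage according to the type of the serving BS, $\Pc(\SThres)=\prob{\SINR>\SThres,E_\L}+\prob{\SINR>\SThres,E_\N}$, and write each term as an expected sum over the points of $\Phi$,
\begin{align*}
\prob{\SINR>\SThres,E_\v}=\expect{\sum_{\X\in\Phi}\indside{E_{\v,\X}}\,\indside{\SINR>\SThres}},\qquad \v\in\{\L,\N\}.
\end{align*}
The Campbell--Mecke formula turns this into $\lambda\int_{\R}\palmprobx{E_{\v,\x},\,\SINR>\SThres}{\x}\,\dd x$, and since the nearest-blockage distances $\qr,\ql$ are independent of $\Phi$ with density $f_\Q$, I would condition on them and average, so that the Palm probability becomes an integral of $\palmprobx{E_{\v,\x},\SINR>\SThres\mid\ql,\qr}{\x}$ against $f_\Q(\ql)f_\Q(\qr)$. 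By the chain rule this conditional probability factors as $g_\v(x\mid\ql,\qr)\,\palmprobx{\SINR>\SThres\mid E_{\v,\x},\ql,\qr}{\x}$, the first factor being exactly Lemma~\ref{lem:gs1DConditionedqq}.

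The second factor is where Rayleigh fading does the work. Conditioned on $E_{\v,\x}$ the serving power is $S=H_0\,\ell_\v(x)$, so $\{\SINR>\SThres\}=\{H_0>\SThres(\noise+I)/\ell_\v(x)\}$. Because cell association uses the fading-\emph{averaged} received power, the event $E_{\v,\x}$ is determined by the path-loss values alone and does not constrain the serving-link fading $H_0$; moreover $H_0$ is independent of the interference $I=\sum_{i\neq0}H_i\ell_{s_i}(X_i)$, which involves only the other fadings. Using $H_0\sim\mathsf{Exp}(1)$ and taking expectation over $I$,
\begin{align*}
\palmprobx{\SINR>\SThres\mid E_{\v,\x},\ql,\qr}{\x}
&=\exp\!\left(-\frac{\SThres\noise}{\ell_\v(x)}\right)\mathbb{E}\!\left[e^{-\frac{\SThres}{\ell_\v(x)}I}\,\Big|\,E_{\v,\x},\ql,\qr\right]\\
&=\exp\!\left(-\frac{\SThres\noise}{\ell_\v(x)}\right)\kappa_\v(\SThres,x,\ql,\qr),
\end{align*}
where the conditional Laplace transform evaluated at $s=\SThres/\ell_\v(x)$ is precisely $\kappa_\v$ by Lemma~\ref{lem:LTI1D}.

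It then remains to reassemble the integral. Substituting the factorization back and collecting the weights gives an integrand $\exp(-\SThres\noise/\ell_\v(x))\,\kappa_\v(\SThres,x,\ql,\qr)\,g_\v(x\mid\ql,\qr)\,f_\Q(\ql)f_\Q(\qr)$. The indicators $\indside{x<\qr}$ in $g_\L$ and $\indside{x\ge\qr}$ in $g_\N$ restrict the $\qr$-range to $(x,\infty)$ for the LoS term and $(0,x)$ for the NLoS term, reproducing the two different $\qr$-limits in the statement. Finally, because the two half-lines carry identical and independent blockage statistics, the location integral over $\R$ equals twice the integral over $\R^+$: computing $g_\v$ for a serving BS on the right and doubling absorbs the symmetric left-side contribution, which yields the overall prefactor $2\lambda$ and the one-sided integrals $\int_0^\infty\dd x$ of the stated expression.

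The step requiring the most care is the independence invoked in the Rayleigh factorization---that conditioning on $E_{\v,\x}$ neither biases $H_0$ nor couples it to $I$. This rests on two facts: the average-power association rule, so that $H_0$ never enters the selection of the tagged BS, and Slivnyak's theorem, under which the interference field seen from the tagged point is again a PPP of intensity $\lambda$, so that Lemma~\ref{lem:LTI1D} is the correct conditional object to insert, with the exclusion radii $\excLL,\excNL,\excLN,\excNN$ already encoding the association constraint $P_i<S$. Once this is in place, the remainder is bookkeeping: matching the exclusion and indicator limits between the two lemmas and carrying the blockage densities through the nested integrals.
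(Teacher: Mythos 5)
Your proposal is correct and follows essentially the same route as the paper's own proof: both decompose coverage by the serving-link type, condition on the nearest blockages $\ql,\qr$, apply the Campbell--Mecke theorem, factor the Palm probability into $g_\v(x\mid\ql,\qr)$ times the conditional SINR tail, use the $\mathsf{Exp}(1)$ fading of the serving link to produce the noise exponential and the conditional Laplace transform $\kappa_\v$, and invoke symmetry about the origin for the prefactor $2\lambda$. The only differences are cosmetic (you interchange the Campbell--Mecke step and the averaging over $(\ql,\qr)$ in the opposite order, which is immaterial), and your weight $f_\Q(\ql)f_\Q(\qr)=\mu^2 e^{-\mu\ql-\mu\qr}$ is in fact the correct normalization, the theorem's displayed $e^{-\mu\ql-\mu\qr}$ having dropped the $\mu^2$.
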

\begin{IEEEproof}
See Appendix \ref{app:Pc1DProof}.
\end{IEEEproof}

Theorem \ref{thm:Pc1D} shows that the expressions obtained while considering the blocking correlation are significantly different than the expressions under the independent blocking assumption reported in the past literature. In the next section, we numerically evaluate the derived expressions of the SINR coverage to provide a quantitative comparison with respect to the independent blocking assumption.

\subsection{Numerical Results} 
We now present numerical results to quantify the impact of correlation among blocking events  and provide valuable insights. We have considered BPLP path-loss model. The LOS path loss coefficient $\alpha_\L$ is $2.2$, path loss at unit distance $\cgain{\L}=10^{-6}$. The NLOS path loss coefficient $\alpha_\N$ is $3.6$, path loss at unit distance $\cgain{\N}=10^{-7}$. $\cgain{s}$ includes the transmit power of 1 W. The noise power is $-174$ dBm/Hz. Bandwidth is 1GHz. The BS density $(\lambda)$ taken is 10 BS/$\mathrm{km}$, the blockage density is $\beta=\mu$ is 0.007/m.

\textbf{Impact of blocking correlation on the accuracy of analysis:} Fig. \ref{fig:sinrcov} shows the coverage probability  for various SINR thresholds. We have also included the coverage probability under independent blocking assumption. We can see that for low to moderate thresholds, the two curves are different from each other. The result demonstrates the importance of considering the blocking correlation in the analysis of coverage probability. This shows that for such thresholds, correlation plays a significant role on the coverage probability.
\begin{figure}[ht!]
    \centering
    \includegraphics[width = .5\textwidth]{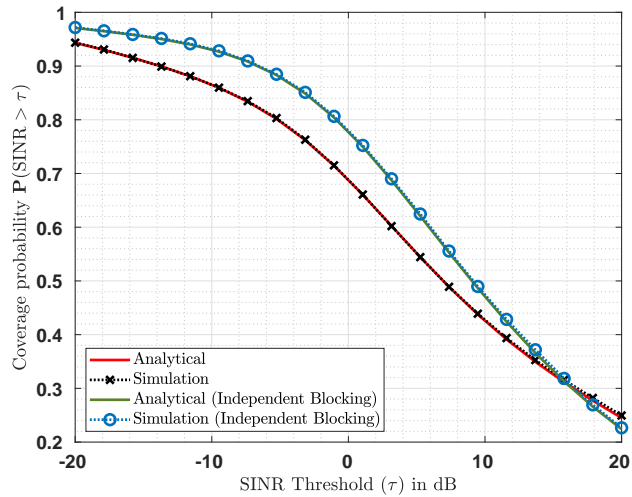}
    \caption{Variation of the SINR coverage of a 1D mmWave cellular network with the SINR threshold ($\SThres$). The coverage probability under independent blocking assumption is also shown. The result demonstrates the importance of considering the blocking correlation in the analysis of coverage probability.
    }
    \vspace{-.1in}
    \label{fig:sinrcov}
\end{figure}

\begin{figure}[ht!]
    \centering
    \includegraphics[width = .5\textwidth]{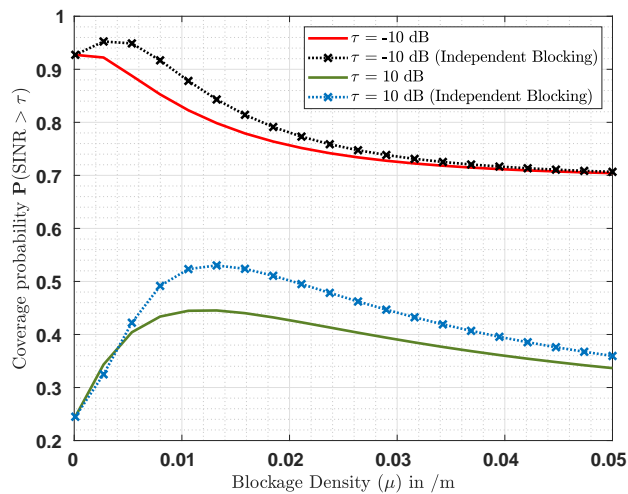}
    \caption{Variation of the SINR coverage  of a 1D mmWave cellular network versus blockage density ($\mu$) for two values of SINR threshold. It is crucial to include blocking correlation in the analysis of mmWave networks.
    }
    \vspace{-.2in}
    \label{fig:sinrvsmu1D}
\end{figure}

\textbf{Impact of blockage density:} Fig. \ref{fig:sinrvsmu1D} shows how correlation impacts the accuracy of the predictions for various values of blockage density $\mu$. We can see that for independent blocking assumption can significantly overestimate the  coverage probability for a wide range of blockage densities. We can also observe that the presence of blockages does not always have a negative impact on the coverage. Up to a certain density, blockages can help reduce the LoS interference, resulting in an improvement in the SINR coverage. The value of optimal blockage density is also incorrectly predicted by the independent blocking assumption. These results show that it is crucial to include blocking correlation in the analysis of 1D mmWave networks.

We can observe that the coverage is significantly different than the coverage obtained under independent blocking assumption. As shown by the 1D analysis, the blockage correlation can significantly change the performance of network which motivates it to further evaluate it for a 2D network. From the above analysis, it is also evident that the exact analysis of blockage correlation requires conditioning on the locations of closest blockages in all directions. In 2D, this becomes difficult as there are multiple directions which prevents the exact analysis. Therefore, we present an approximate but fairly accurate analysis for the 2D case.

 \section{2D Cellular Network}\label{sec:2D}
We now analyze  a 2D  mmWave cellular network \ie the deployment space $\sE=\mathbb{R}^2$. 
Since the blocking state of BSs in 2D may depend on the locations of many blockages, the exact analysis is not possible for the 2D case. Hence, we will take a different approach  based on  the correlation between blocking events of the links \cite{gupta2017macrodiversity} to analyze the 2D case.

\subsection{Correlation between Blocking Events of Links}
Let us consider two BSs located at distance $r_1$ and $r_2$ from the user with angle $\theta\in[0,\pi]$ between the two links. Since a blockage can block both links together, the blocking events of these two links are correlated. From \cite{gupta2017macrodiversity}, the joint LoS probability of the two links (\ie the probability that both the links are LoS) is
\begin{align}
    \label{eqn:joint_prob}
    \probln{\L,\L}(r_{\mathrm{l}},r_{\mathrm{2}},\theta) 
    & =\exp(-\mu \mathcal{N}(r_{\mathrm{1}},r_{\mathrm{2}},\theta))\\
\text{with }\quad \mathcal{N}(r_{\mathrm{l}},r_{\mathrm{2}},\theta) &= \int_0^{\infty} \int_0^{\pi} A(r_{\mathrm{1}},r_{\mathrm{2}},\theta,l,\delta ) F_{\mathrm{L}}(dl) F_{{\Delta}}(d\delta). \label{eq:Ndef}
\end{align}
Here, $A(r_{\mathrm{1}},r_{\mathrm{2}},\theta,l,\delta )$ is the area of the union of two parallelograms with one common edge having length $l$ and angle $\delta$ and the other edge of lengths $r_1$ and $r_2$ respectively with angle $\theta$ in between. This area represents the region in which the center of the blockage should not lie to ensure both the links are LoS.
It is given as
\begin{align}
A(r_{\mathrm{1}},r_{\mathrm{2}},&\theta,l,\delta ) = l r_{\mathrm{1}} \sin(|\delta-\theta|) + lr_{\mathrm{2}} \sin(\delta) -\indside{\delta>\theta} \nonumber \\ 
&  \frac{l^2\sin(\delta)\sin(\delta-\theta)}{2 \sin(\theta)} \times \left[ 1-\left(1 - \min\left( 1,\frac{r_{\mathrm{1}} \sin(\theta)}{l \sin(\delta)}\frac{r_{\mathrm{2}} \sin(\theta)}{l \sin(\delta-\theta)} \right)\right)^2 \right].\label{eq:ADef}
\end{align}

Substituting \eqref{eq:ADef} in \eqref{eq:Ndef}, $\mathcal{N}(r_{\mathrm{l}},r_{\mathrm{2}},\theta)$ can also be  written as
\begin{align}
&\mathcal{N}(r_{\mathrm{l}},r_{\mathrm{2}},\theta)=\beta/\mu (r_1+ r_2)-\F'(r_1,r_2,\theta)\end{align}
\begin{align}
&\text{where } 
\F'(r_1,r_2,\theta) =\mathbb{E}_{L,\Delta}\left[\indside{\Delta>\theta}  \frac{L^2\sin(\Delta)\sin(\Delta-\theta)}{2 \sin(\theta)}  \nonumber\right. \\
&\qquad \qquad \qquad \times \left.\left[
	 1-\left(1 - \min
	 	\left( 1,\frac{r_{1} \sin(\theta)}{L\sin(\Delta)}\frac{r_{\mathrm{2}} \sin(\theta)}{L \sin(\Delta-\theta)} \right)
		\right)^2 
	 \right]\right].\label{eq:Fdashdef}
\end{align}
Here, the blocking parameter $\beta = {2\mu\expect{L}}/{\pi} $ where $\expect{L}$ is average length of a blockage \cite{bai2014coverage}. 
Since $F'\ge0$,  $\mathcal{N}$ is upper bounded by $\beta (r_1+r_2)$. Thus, from \eqref{eqn:joint_prob}, 
\begin{align}
    \probln{\L,\L}(r_{\mathrm{l}},r_{\mathrm{2}},\theta) 
    & >\probln{\L}(r_{\mathrm{l}}) \probln{\L}(r_{\mathrm{2}})= \exp(-\beta r_1-\beta r_2) .\label{eqn:joint_prob_lb}
\end{align}
Note that the the RHS is the LoS probability under the independent blocking assumption which is the case when blockages have  very small lengths. This is due to the fact that a small blockage blocks only one link at a time. When the blockage length is large, a single blockage may block multiple links simultaneously resulting in a higher correlation. We now observe the impact of scaling of the blockage length.
\begin{lemma}\label{lemma:blockcorr_monotonic}
If we increase blockages' length while keeping $\beta$ constant, the value of $F'$ given in \eqref{eq:Fdashdef} monotonically increases and hence, $\probln{\L,\L}(r_{\mathrm{l}},r_{\mathrm{2}},\theta) $ increases. This indicates an increase in the blockage (or LOS) correlation.
\end{lemma}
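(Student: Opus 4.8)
The plan is to reduce the claim to a pointwise monotonicity statement about the integrand of $\F'$ and then transfer it to the expectation. Writing $\F'(r_1,r_2,\theta)=\mathbb{E}_{L,\Delta}\!\left[\phi_\Delta(L)\right]$, where for a fixed orientation $\delta>\theta$ the integrand is $\phi_\delta(l)=\indside{\delta>\theta}\,\frac{l^2\sin(\delta)\sin(\delta-\theta)}{2\sin(\theta)}\left[1-\left(1-\min\left(1,\frac{r_1 r_2\sin^2(\theta)}{l^2\sin(\delta)\sin(\delta-\theta)}\right)\right)^2\right]$, I would first argue that it suffices to show $\phi_\delta(l)$ is non-decreasing in $l$ for every fixed $\delta,\theta,r_1,r_2$. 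Indeed, ``increasing the blockage length'' means replacing $L$ by a stochastically larger length (the scaling $L\mapsto aL$ with $a\ge 1$ being the canonical case), and a non-decreasing integrand then yields $\mathbb{E}[\phi_\Delta(aL)]\ge\mathbb{E}[\phi_\Delta(L)]$ by the obvious pointwise comparison $\phi_\delta(aL)\ge\phi_\delta(L)$ (or, more generally, a monotone coupling under stochastic dominance), i.e. $\F'$ increases. Since $\probln{\L,\L}=\exp(-\mu\mathcal{N})=\exp\!\left(-\beta(r_1+r_2)+\mu\F'\right)$, in which the first term is the independent-blocking baseline pinned down by $\beta$, the increase in the correlation correction $\mu\F'$ raises $\probln{\L,\L}$.

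To establish the pointwise monotonicity I would absorb the geometry into two positive constants, $w=\frac{\sin(\delta)\sin(\delta-\theta)}{2\sin(\theta)}$ and $m=\frac{r_1 r_2\sin^2(\theta)}{\sin(\delta)\sin(\delta-\theta)}$, so that $\phi_\delta(l)=w\,l^2\,[\,2p(l)-p(l)^2\,]$ with $p(l)=\min(1,m/l^2)$. The only real difficulty is the $\min$, which I would dispose of by splitting on the threshold $l^2=m$. For $l^2\le m$ one has $p=1$ and hence $\phi_\delta(l)=w\,l^2$; for $l^2>m$ one has $p=m/l^2$ and the $l^2$ prefactor cancels, leaving $\phi_\delta(l)=2wm-wm^2/l^2$. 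Both pieces are manifestly increasing in $l$ (with derivatives $2wl$ and $2wm^2/l^3$, respectively), the two expressions agree at $l^2=m$ (both equal $wm$), and the one-sided derivatives also match there ($2w\sqrt{m}$ on each side). Thus $\phi_\delta$ is a continuously differentiable, non-decreasing function on $[0,\infty)$ that saturates at $2wm=r_1 r_2\sin(\theta)$. This computation is the technical heart of the argument.

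Having shown $\phi_\delta$ is non-decreasing for each admissible orientation (the factor $\indside{\delta>\theta}$ merely restricts the support and does not affect monotonicity on $\{\delta>\theta\}$), the transfer step gives $\F'(r_1,r_2,\theta)$ non-decreasing under any increase of the blockage length, and the displayed identity for $\probln{\L,\L}$ closes the argument. The main obstacle I anticipate is twofold: first, the algebraic simplification of the $[\,1-(1-p)^2\,]$ factor that causes the $l^2$ prefactor to cancel in the large-$l$ regime, which is what produces the clean monotone, saturating form and must be handled carefully across the threshold $l^2=m$; and second, making precise the phrase ``increase the length'' as a stochastic-dominance (or scaling) statement so that the pointwise monotonicity of $\phi_\delta$ legitimately transfers to $\F'$. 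Once these two points are settled, the remaining steps are routine.
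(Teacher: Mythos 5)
Your pointwise computation (splitting at $l^2=m$, checking continuity and the matching one-sided derivatives) is carried out correctly for the formula as printed, and the paper itself only defers to Lemma~1 of the cited macrodiversity paper, so there is nothing to compare at that level. The genuine gap is in the transfer step ``hence $\probln{\L,\L}$ increases.'' Keeping $\beta = 2\mu\expect{L}/\pi$ constant while scaling $L\mapsto aL$ forces $\mu\mapsto\mu/a$, and since $\probln{\L,\L}=\exp(-\beta(r_1+r_2)+\mu \F')$, the quantity that must increase is $\mu \F' = \tfrac{\mu}{a}\,\E\bigl[\phi_\Delta(aL)\bigr]$, not $\F'$ alone. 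Monotonicity of $\phi_\delta$ in $l$ is necessary but not sufficient for this; what is needed is monotonicity of $\phi_\delta(l)/l$. Your own analysis shows this fails for the printed integrand: you prove $\phi_\delta$ saturates at $r_1r_2\sin\theta$, so $\mu\F'\le \mu\, r_1r_2\sin\theta/a \to 0$ as $a\to\infty$, i.e.\ the correlation correction would \emph{vanish} in the long-blockage limit --- the opposite of the claimed monotone increase, and inconsistent with the nontrivial maximally-correlated upper bound \eqref{eqn:joint_prob_ub} that the paper derives from this very lemma.

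The source of the trouble is that the saturation is an artifact of the argument of the $\min$ in \eqref{eq:ADef}: a direct computation of the intersection of the two parallelograms gives an area that grows \emph{linearly} in $l$ for large $l$, and the closed form is obtained with $\min\bigl(1,\min(\kappa_1,\kappa_2)\bigr)$ where $\kappa_1=\tfrac{r_1\sin\theta}{l\sin\delta}$ and $\kappa_2=\tfrac{r_2\sin\theta}{l\sin(\delta-\theta)}$, rather than with the product $\kappa_1\kappa_2$ as printed. (A sanity check with $\theta=\pi/2$, $\delta=3\pi/4$, $r_1=r_2=1$, $l=10$ gives a true intersection area of $10/\sqrt{2}-1/2\approx 6.57$ versus $0.99$ from the printed formula, and only the linear-growth version reproduces the limit \eqref{eqn:joint_prob_ub}.) With the corrected integrand your strategy goes through, but the pointwise lemma you must prove is that $\phi_\delta(l)/l$ is non-decreasing --- it equals $wl$ in the small-$l$ regime and $\tfrac{W}{2\sin\theta}(2c-c^2/l)$ with $c=\sin\theta\min(r_1/\sin\delta,\,r_2/\sin(\delta-\theta))$ in the large-$l$ regime, both increasing --- which then yields that $\tfrac{1}{a}\phi_\delta(al)$, and hence $\mu\F'$ and $\probln{\L,\L}$, are non-decreasing in the scaling factor $a$. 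As written, your proof establishes the (insufficient) monotonicity of $\F'$ only, and for the formula taken at face value the conclusion about $\probln{\L,\L}$ is actually false in the large-length regime.
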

\begin{IEEEproof}
The proof is similar to Lemma 1 of \cite{gupta2017macrodiversity}.
\end{IEEEproof}
Lemma \ref{lemma:blockcorr_monotonic} provides us with the following upper bound  \cite{gupta2017macrodiversity} on the joint LoS probability
\begin{equation}
\label{eqn:joint_prob_ub}
    \probln{\L,\L}(r_1,r_2,\theta) \leq \exp\Big(-\beta r_1 - \beta r_2 +\frac{\min(r_1,r_2)}{2}\beta (1+\cos(\theta))\Big).
\end{equation}
The upper bound represents the value corresponding to the maximally correlated case \ie the case where blockages tend to be of infinite length. 

The event that a link is LoS, eliminates the chances of the presence of any blockage on the link. Therefore, it increases the LoS probability of nearby links. That is the reason why  the independent blocking case represents the lower bound on the joint LoS probability whereas the upper bound is corresponding to the case with infinite  blockage length.

Similarly, the joint probability that the first link is LoS and the other is NLoS is given as
\begin{align}
    \probln{\L,\N}(r_{\mathrm{l}},r_{\mathrm{2}},\theta) & =\exp(-\beta r_{\mathrm{1}})-\exp(-\mu \mathcal{N}(r_{\mathrm{1}},r_{\mathrm{2}},\theta)).
    \end{align}
Also, the joint probability that the first link is NLoS and other is LoS is  
\begin{align}
    \probln{\N,\L}(r_{\mathrm{1}},r_{\mathrm{2}},\theta) & = \exp(-\beta r_{\mathrm{2}})-\exp(-\mu \mathcal{N}(r_{\mathrm{1}},r_{\mathrm{2}},\theta)) ,
\end{align}
and the probability that both links are NLoS is
\begin{align}
    \probln{\N,\N}(r_{\mathrm{1}},r_{\mathrm{2}},\theta) & = 1-\exp(-\beta r_{\mathrm{1}})-\exp(-\beta r_{\mathrm{2}}) +  \exp(-\mu \mathcal{N}(r_{\mathrm{1}},r_{\mathrm{2}},\theta)).
\end{align}

Further, the joint  LoS probability of three and more links can be computed, however the expressions become very complex. Since the blocking state of BSs may depend on the locations of many blockages, the exact analysis is not possible for the 2D case. Hence, in this work, we  consider only the first order correlation in LoS probabilities for simplicity. In particular, we consider the correlation between $\s0$ and $\s{i}\ \forall i$ \ie blocking in the serving link and each interfering link and  ignore the correlation between interfering links (\ie between $\s{i}$ and $\s{j}$, $i\ne j;i,j\ne0$). However, the analysis can be extended to consider higher order of correlation. In Section \ref{chap:num_results}, we will show that the effect of this approximation on the accuracy of results is minimal  and demonstrate that the first order assumption is indeed able to provide sufficient accuracy.

\subsection{Distance Distribution of the Serving LoS and NLoS BS}
\label{chap:d_serving}

For the 2D case, the probability $G_{\L}(r)$  that the serving BS is  located outside the 2D ball $\Ball(\origin,r)$ and is LoS, is given as
\begin{align}
        &G_{\L}(r) =\expects{\Phi}{\sum_{\X \in \Phi}
        \indside{E_{\L,\X}} \indside{\X\notin\Ball(\origin,r)}}
        = 2\pi\lambda  \int_{r}^\infty 
       \probserv{\L}(x) 
        x \dd x\label{eq:GLr1},\\
&\text{with } \probserv{\L}(x) \!\isdefined \palmprobx{E_{\L,\x}}{\x}\\
& \approx \!\probln{\L}({x}) \!
\expS{\!-\lambda\int_0^{2\pi} \!\!\!\int_0^{{x}}
\!\frac{\probln{\L,\L}({x},t,\theta)}{\probln{\L}({x})}t\dd t \dd \theta
\!-\lambda\int_0^{2\pi} \!\!\!\int_0^{\excNL{}({x})}
\!
\frac{\probln{\L,\N}({x},t,\theta)}
{\probln{\L}({x})}
t\dd t \dd\theta\!}\label{eq:glx}.
\end{align}

Similarly, we get the probability $G_{\N}(r) $ that the typical user is served by an NLoS  BS   located outside the ball $\Ball(\origin,r)$ as
\begin{align}
    &G_{\N}(r) = 2\pi \lambda \int_r^{\infty} \probserv{\N}(x) x \dd x,\\
&\text{with }
\probserv{\N}(x) \! \approx
\! \probln{\N}({x}) \!
\expS{\!-\lambda\!\int_0^{2\pi} \!\!
\int_0^{\excLN{}({x})}\!\!
\frac{\probln{\N,\L}({x},t,\theta)}{\probln{\N}({x})}t\dd t \dd \theta
\!-\!\lambda\int_0^{2\pi}\!\! \int_0^{{x}}
\frac{\probln{\N,\N}({x},t,\theta)}
{\probln{\N}({x})}
t\dd t \dd\theta}\label{eq:gNx2D}.
\end{align}

Further, the association probability $A_s$ \ie the probability of the event  that the typical user is served by a $s$ type BS is given in the following Lemma.
\begin{lemma}
The LoS and NLoS association probabilities of the typical user in a 2D mmWave cellular network are  given as
\begin{align}
A_\L=\prob{E_\L}=G_\L(0)=&2\pi \lambda \int_0^{\infty} \probserv{\L}(x) x \dd x,\label{eq:LOS_assoc_prob}\\
A_\N=\prob{E_\N}=G_\N(0)=&2\pi \lambda \int_0^{\infty} \probserv{\N}(x) x \dd x\label{eq:NLOS_assoc_prob}
\end{align}
where $\probserv{\L}$ and $\probserv{\N}$ are given in \eqref{eq:glx} and \eqref{eq:gNx2D}.
\end{lemma}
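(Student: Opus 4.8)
The plan is to observe that the two association probabilities are nothing but the already-derived quantities $G_\L(r)$ and $G_\N(r)$ evaluated at $r=0$, so the lemma is essentially a specialization of \eqref{eq:GLr1} (and its NLoS analogue). First I would argue that $A_\L=\prob{E_\L}=G_\L(0)$. Recall that $E_\L$ is the event that the typical user is served by some LoS BS, while $G_\L(r)$ was defined as the expected number of BSs that are simultaneously the LoS serving BS and located outside $\Ball(\origin,r)$. Because the maximum-average-received-power association rule selects a single serving BS almost surely (ties occur with probability zero under the continuous path-loss and fading), the indicator sum $\sum_{\X\in\Phi}\indside{E_{\L,\X}}\indside{\X\notin\Ball(\origin,r)}$ takes values in $\{0,1\}$, so its expectation equals the probability that the unique LoS serving BS lies outside $\Ball(\origin,r)$.

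Next I would set $r=0$. The ball $\Ball(\origin,0)=\{\origin\}$ is a single point of zero Lebesgue measure, and since $\Phi$ is a homogeneous PPP the probability that any BS sits exactly at the origin is zero; equivalently, the serving BS is at a strictly positive distance almost surely. Hence the constraint $\X\notin\Ball(\origin,0)$ is almost surely vacuous, and $G_\L(0)=\prob{E_\L}=A_\L$. Substituting $r=0$ in the Campbell--Mecke representation \eqref{eq:GLr1} then yields $A_\L=2\pi\lambda\int_0^\infty \probserv{\L}(x)\,x\,\dd x$, which is the first claim. Here I rely on the isotropy of the model, which makes the Palm probability $\palmprobx{E_{\L,\x}}{\x}$ depend on $\x$ only through its distance $x$ and thus permits the reduction to a one-dimensional radial integral.

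The NLoS case is entirely symmetric: replacing $E_\L$ by $E_\N$ and $\probserv{\L}$ by $\probserv{\N}$ in the above argument and invoking the NLoS counterpart of \eqref{eq:GLr1} gives $A_\N=G_\N(0)=2\pi\lambda\int_0^\infty \probserv{\N}(x)\,x\,\dd x$. Since the heavy lifting --- the Campbell--Mecke reduction and the approximate evaluation of $\probserv{\L}$ and $\probserv{\N}$ through the first-order blocking correlation --- has already been carried out in establishing \eqref{eq:GLr1}, \eqref{eq:glx}, and \eqref{eq:gNx2D}, no genuine obstacle remains. The only point requiring care is the almost-sure existence and uniqueness of the serving BS, which is guaranteed because the bounded, decreasing path-loss functions ensure a well-defined finite maximizer of the received power over the infinitely many BSs, while ties have probability zero.
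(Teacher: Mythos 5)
Your proposal is correct and matches the paper's (implicit) reasoning: the lemma is simply the $r=0$ specialization of the Campbell--Mecke representation $G_\L(r)=2\pi\lambda\int_r^\infty g_\L(x)\,x\,\dd x$ and its NLoS analogue, with $A_s=\prob{E_s}=G_s(0)$ because the serving BS is almost surely unique and at positive distance. The extra care you take about the indicator sum being $\{0,1\}$-valued is a sound (and slightly more explicit) justification of a step the paper treats as immediate.
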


Equipped with the expressions of $G_s(r)$ and $A_s$, we now give the PDF of the distance of the serving BS conditioned on the event $E_s$. Note that the angular distribution of the location of the serving BS is uniform.
\begin{lemma}
For the typical user in a 2D mmWave cellular network, the PDF of its serving BS distance conditioned on $E_\L$ is
\begin{align}
f_{X_0|E_\L}(x)&=\frac1{A_\L}2\pi\lambda \probserv{\L}(x)x, \ \ (x\ge0).
\end{align}
Similarly, the PDF of its serving BS distance conditioned on $E_\N$ is
\begin{align}
f_{X_0|E_\N}(x)&=\frac1{A_\N}2\pi\lambda \probserv{\N}(x)x, \ \ (x\ge0).
\end{align}
\end{lemma}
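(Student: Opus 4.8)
The plan is to obtain the conditional density by differentiating the conditional CDF of $X_0$, which is immediately expressible through the already-derived joint probability $G_\L(r)$ and the association probability $A_\L$. The first step is to interpret $G_\L(r)$ probabilistically. In its defining expression $G_\L(r)=\expects{\Phi}{\sum_{\X\in\Phi}\indside{E_{\L,\X}}\indside{\X\notin\Ball(\origin,r)}}$, at most one summand is nonzero, because the typical user has a unique serving BS almost surely; hence the sum of indicators is itself a $\{0,1\}$-valued indicator and its expectation equals a probability. Thus $G_\L(r)=\prob{X_0>r,\,E_\L}$, the joint probability that the user is served by a LoS BS lying outside $\Ball(\origin,r)$. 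Setting $r=0$ recovers $A_\L=G_\L(0)=\prob{E_\L}$, consistent with the preceding lemma.

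Next I would write the conditional CDF. Since $\prob{X_0>r,\,E_\L}=G_\L(r)$, taking the complementary radial event within $E_\L$ gives $\prob{X_0\le r,\,E_\L}=A_\L-G_\L(r)$, so
\begin{align*}
F_{X_0\mid E_\L}(r)=\prob{X_0\le r\mid E_\L}=\frac{\prob{X_0\le r,\,E_\L}}{\prob{E_\L}}=\frac{A_\L-G_\L(r)}{A_\L}.
\end{align*}
Differentiating in $r$ and applying the fundamental theorem of calculus to $G_\L(r)=2\pi\lambda\int_r^\infty \probserv{\L}(x)\,x\,\dd x$ gives $\tfrac{\dd}{\dd r}G_\L(r)=-2\pi\lambda\,\probserv{\L}(r)\,r$, whence
\begin{align*}
f_{X_0\mid E_\L}(x)=-\frac1{A_\L}\frac{\dd}{\dd x}G_\L(x)=\frac1{A_\L}\,2\pi\lambda\,\probserv{\L}(x)\,x,\qquad x\ge0,
\end{align*}
which is the claimed expression. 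The NLoS case is identical after replacing $\L$ by $\N$, $G_\L$ by $G_\N$, $A_\L$ by $A_\N$, and $\probserv{\L}$ by $\probserv{\N}$.

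Finally, the statement that the angular distribution of the serving BS is uniform would be justified by the rotational invariance of the model: the BS process $\Phi$ is an isotropic PPP, and the blockage process has orientations drawn from $F_\Delta$ independently of position, so the whole configuration seen from the typical user at $\origin$ is invariant in distribution under rotations about the origin. Consequently the direction of the serving BS is uniform on $[0,2\pi)$ and independent of $X_0$, which is exactly why only the radial density is needed. There is no genuine obstacle in this argument; the one point warranting care is the uniqueness-of-serving-BS step that converts the Campbell-type sum in the definition of $G_\L$ into an honest probability, since it is this fact that makes $A_\L-G_\L(r)$ a valid sub-probability and the ratio a valid conditional CDF.
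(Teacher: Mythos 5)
Your proof is correct and follows the same route the paper implicitly takes: it leaves this lemma unproved precisely because the density is the negative derivative of the conditional CCDF $G_\L(r)/A_\L$, which is exactly your argument. Your added remarks on the uniqueness of the serving BS (making the Campbell-type sum an honest probability) and on isotropy justifying the uniform angular distribution are sound and fill in details the paper omits.
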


\subsection{Bounds}
We now present some simplified  expressions acting as the upper and lower bound for $g_s(x)$. 

\begin{lemma}
\label{lemma:glxgeneralbounds}
In a 2D mmWave network, the function $\probserv{\L}(x)$ as given in \eqref{eq:glx} decreases when the blocking correlation increases. Further, it can be  bounded as
\begin{align}
  	\probserv{\L}(x)			&	\le \overline{\probserv{\L}}(x)
  	\!= \!\expS{-\beta x
	-\!\lambda\pi\excNL^2{}\!({x})
	-\! \frac{2\pi \lambda}{\beta^2} \!
	(F_1(\excNL{(x)})-F_1(x))}\label{eq:G_Ind_Gen}\\
\probserv{\L}(x) &\geq\underline{g_\L}(x)  =e^{-\beta x} \expS{-\lambda\pi\excNL^2{}({x})-4 \lambda \int_{\excNL{}({x})}^x \int_0^{\pi/2} t \expU{-\beta t \sin^2 \theta } \dd \theta \dd t}\label{eq:G_Inf_p1_Gen} \\
&\hspace{.55in}=
  \expS{-\beta x -\lambda\pi\excNL^2{}({x}) -
  \frac13 \frac{2\pi\lambda}\beta [F(x,\beta)-F(\excNL{}(x),\beta)]
  },\label{eq:G_Inf_Gen}
\end{align}
 $$\text{with } F_1(z)= (1+\beta z)\expU{-\beta z},\ \text{and } \ F (r,\beta)= r e^{-\beta r/2}\left(
  \beta r\IB_0(\beta r/2)+(\beta r+2)\IB_1(\beta r/2)
  \right).$$
Here,  $\overline{\probserv{\L}}(x)$ corresponds to the value of $\probserv{\L}(x)$ under the independent blockage assumption and $\underline{g_\L}(x)$ corresponds to the  value  of $\probserv{\L}(x)$ for the maximally correlated case.
\end{lemma}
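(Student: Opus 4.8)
The plan is to reduce the first-order expression \eqref{eq:glx} for $\probserv{\L}(x)$ to a single transparent radial integral, and then obtain all three claims by extremizing over the correlation term $\mu\F'$. First I would substitute the explicit forms $\probln{\L}(x)=e^{-\beta x}$, $\probln{\L,\L}(x,t,\theta)=\exp(-\beta(x+t)+\mu\F'(x,t,\theta))$ and $\probln{\L,\N}(x,t,\theta)=e^{-\beta x}-\exp(-\beta(x+t)+\mu\F'(x,t,\theta))$ into the two integrands. The ratios then collapse to $\probln{\L,\L}/\probln{\L}=\exp(-\beta t+\mu\F'(x,t,\theta))$ and $\probln{\L,\N}/\probln{\L}=1-\exp(-\beta t+\mu\F'(x,t,\theta))$. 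The constant $1$ in the second ratio integrates (the $\theta$-integral giving $2\pi$ and the $t$-integral over $[0,\excNL{}(x)]$ giving $\excNL^2{}(x)/2$) to the term $\lambda\pi\excNL^2{}(x)$, while the two exponential pieces, supported on $[0,x]$ and $[0,\excNL{}(x)]$, combine into a single integral over $[\excNL{}(x),x]$ (using $\excNL{}(x)<x$). This yields the exact reduced form
\[
\probserv{\L}(x)=e^{-\beta x}\exp\Big(-\lambda\pi\excNL^2{}(x)-\lambda\int_0^{2\pi}\!\!\int_{\excNL{}(x)}^{x}\! t\,e^{-\beta t+\mu\F'(x,t,\theta)}\,\dd t\,\dd\theta\Big),
\]
from which everything follows by controlling $\mu\F'$.

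For the monotonicity claim, observe that $\mu\F'$ enters only through the increasing map $\F'\mapsto e^{-\beta t+\mu\F'}$; since Lemma~\ref{lemma:blockcorr_monotonic} shows $\F'$ increases with the blockage length at fixed $\beta$, the inner integral increases and the outer exponent becomes more negative, so $\probserv{\L}(x)$ decreases as correlation grows. For the two bounds I would plug in the extreme admissible values of $\mu\F'$. The independent case is $\F'=0$ (its minimum, as $\F'\ge0$), which gives the largest $\probserv{\L}$; the integrand reduces to $t e^{-\beta t}$, and $\int_{\excNL{}(x)}^{x} t e^{-\beta t}\dd t=\beta^{-2}(F_1(\excNL{}(x))-F_1(x))$ with $F_1(z)=(1+\beta z)e^{-\beta z}$, producing $\overline{\probserv{\L}}(x)$ of \eqref{eq:G_Ind_Gen}. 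The maximally correlated case uses the upper bound \eqref{eqn:joint_prob_ub}, i.e.\ $\mu\F'(x,t,\theta)=\tfrac{t}{2}\beta(1+\cos\theta)$ (here $\min(x,t)=t$ since $t\le x$), which gives the smallest $\probserv{\L}$; then $e^{-\beta t+\mu\F'}=e^{-\beta t\sin^2(\theta/2)}$, and the substitution $\theta\mapsto\theta/2$ together with symmetry of $\sin^2$ about $\pi/2$ rewrites the $\theta$-range as $[0,\pi/2]$ with a factor of $4$, yielding \eqref{eq:G_Inf_p1_Gen}.

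The final and most computational step is the closed form \eqref{eq:G_Inf_Gen}. Here I would first evaluate the inner integral using $\int_0^{\pi/2}e^{-\beta t\sin^2\theta}\dd\theta=\tfrac{\pi}{2}e^{-\beta t/2}\IB_0(\beta t/2)$ (which follows from $\int_0^{\pi}e^{z\cos u}\dd u=\pi\IB_0(z)$), reducing the lower bound's double integral to $2\pi\lambda\int_{\excNL{}(x)}^{x} t\,e^{-\beta t/2}\IB_0(\beta t/2)\,\dd t$. I would then verify that $F(r,\beta)/(3\beta)$ is an antiderivative of $t\,e^{-\beta t/2}\IB_0(\beta t/2)$ by differentiating $F$ and applying the Bessel identities $\IB_0'=\IB_1$ and $\IB_1'(z)=\IB_0(z)-z^{-1}\IB_1(z)$: the $\IB_1$ contributions cancel identically and the $\IB_0$ terms collapse to $3\beta\,r\,e^{-\beta r/2}\IB_0(\beta r/2)$, confirming $\tfrac{\dd}{\dd r}[F(r,\beta)/(3\beta)]=r\,e^{-\beta r/2}\IB_0(\beta r/2)$ and hence the stated form. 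I expect this Bessel bookkeeping---keeping the $\IB_0$ and $\IB_1$ coefficients straight so the $\IB_1$ terms vanish---to be the main obstacle, since the reduction to the single annular integral and the identification of the two extremal correlation regimes are conceptually straightforward once the substitutions above are made.
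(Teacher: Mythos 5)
Your proposal is correct and follows essentially the same route as the paper's proof in Appendix F: the same reduction of \eqref{eq:glx} to the annular integral $\int_{\excNL{}(x)}^{x}$ via $\probln{\L,\L}+\probln{\L,\N}=\probln{\L}$, the same monotonicity argument in $\F'$, and the same substitutions of the extremal joint-LoS probabilities \eqref{eqn:joint_prob_lb} and \eqref{eqn:joint_prob_ub} followed by the $\IB_0$ identity for the closed form. Your explicit verification that $F(r,\beta)/(3\beta)$ is an antiderivative of $r e^{-\beta r/2}\IB_0(\beta r/2)$ is a detail the paper omits, and it checks out.
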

\begin{IEEEproof}
See Appendix \ref{app:glxgeneralbounds}.
\end{IEEEproof}

\begin{remark}
Lemma \ref{lemma:glxgeneralbounds} shows that in a 2D mmWave network, the value of $g_{\L}(r)$ when blocking correlation is considered, is less than the respective value obtained under independent blocking assumption and greater than the value obtained under the maximally correlated case.
\end{remark}

\begin{remark}
From \eqref{eq:GLr1} and \eqref{eq:LOS_assoc_prob}, it can be seen that  the CCDF of  LoS serving BS distance $G_\L(r)$ and the LOS association probability $A_\L$  in a 2D mmWave cellular network decrease when blockage correlation increases. Therefore, independent blocking assumption and maximally correlated case act as upper and lower bounds respectively for these metrics also. 
\end{remark}

Similarly, the bounds can be derived for $g_\N$ as given in the following Lemma.
\begin{lemma}
\label{lemma:gnxgeneralbounds}
In a 2D mmWave network, the function $\probserv{\N}(x)$ 
increases when the blocking correlation increases. Further, it can be  bounded as
\begin{align}
  	\probserv{\N}(x)			&	\le \overline{\probserv{\N}}(x)
  	\!= (1-\expU{-\beta x})\expS{\!
	-\!\lambda\pi x^2\!\!
	+\! \frac{2\pi \lambda}{\beta^2} \!
	\left(F_1(\excLN{(x)})-F_1(x)\right)\!},\nonumber
			\\
\probserv{\N}(x) &\geq\underline{g_\N}(x)  = (1-e^{-\beta x}) 
\expS{-\lambda\pi x^2+
\frac{\lambda 2 \pi}{\beta^2}(F_1(\excLN{(x)})-F_1(x))
 \frac{1-e^{-\beta x/2}\IB_0(\beta r/2)}{1-e^{-\beta x}} 
  }
  \nonumber
\end{align}
\end{lemma}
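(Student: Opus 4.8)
The plan is to mirror the proof of Lemma~\ref{lemma:glxgeneralbounds}, working directly with the exponent of $\probserv{\N}(x)$ in \eqref{eq:gNx2D}. First I would rewrite the joint probabilities appearing there in terms of $\probln{\L,\L}$ using the identities stated after \eqref{eqn:joint_prob_ub}, namely $\probln{\N,\L}(x,t,\theta)=\probln{\L}(t)-\probln{\L,\L}(x,t,\theta)$ and $\probln{\N,\N}(x,t,\theta)=1-\probln{\L}(x)-\probln{\L}(t)+\probln{\L,\L}(x,t,\theta)$, with $\probln{\L}(u)=e^{-\beta u}$ and $\probln{\N}(x)=1-e^{-\beta x}$. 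Substituting into the two integrals of \eqref{eq:gNx2D} and collecting the $\probln{\L,\L}$ contributions, the LoS-interferer integral over $t\in[0,\excLN{}(x)]$ supplies $-\int_0^{\excLN{}(x)}\probln{\L,\L}\,t\,\dd t$ while the NLoS-interferer integral over $t\in[0,x]$ supplies $+\int_0^{x}\probln{\L,\L}\,t\,\dd t$. The overlapping $[0,x]$ pieces cancel, so all dependence on the correlation collapses onto the annulus $x\le t\le\excLN{}(x)$ (recall $\excLN{}(x)>x$) through the single term $-\int_0^{2\pi}\int_x^{\excLN{}(x)}\probln{\L,\L}\,t\,\dd t\,\dd\theta$.

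This cancellation is what drives the monotonicity claim. After dividing by $\probln{\N}(x)$ and attaching the $-\lambda$ prefactor, the exponent of $\probserv{\N}(x)$ splits into a correlation-free part plus $+\tfrac{\lambda}{1-e^{-\beta x}}\int_0^{2\pi}\int_x^{\excLN{}(x)}\probln{\L,\L}\,t\,\dd t\,\dd\theta$, which is increasing in $\probln{\L,\L}$. By Lemma~\ref{lemma:blockcorr_monotonic}, $\probln{\L,\L}$ grows with the blockage length at fixed $\beta$, so $\probserv{\N}(x)$ indeed increases with the blocking correlation, and the two bounds are obtained at the extreme (independent and maximally correlated) values of $\probln{\L,\L}$.

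For the upper bound I would set $\probln{\L,\L}=\probln{\L}(x)\probln{\L}(t)=e^{-\beta x}e^{-\beta t}$ (independent blocking). Using $\int_0^z t e^{-\beta t}\,\dd t=\tfrac{1}{\beta^2}(1-F_1(z))$ with $F_1(z)=(1+\beta z)e^{-\beta z}$, the correlation-free part evaluates to $-\pi\lambda x^2-\tfrac{2\pi\lambda}{\beta^2(1-e^{-\beta x})}(F_1(x)-F_1(\excLN{}(x)))$, and the annulus term contributes $+\tfrac{2\pi\lambda e^{-\beta x}}{\beta^2(1-e^{-\beta x})}(F_1(x)-F_1(\excLN{}(x)))$; their sum telescopes to $-\pi\lambda x^2+\tfrac{2\pi\lambda}{\beta^2}(F_1(\excLN{}(x))-F_1(x))$, which is exactly $\overline{\probserv{\N}}(x)$.

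For the lower bound I would insert the maximally correlated value $\probln{\L,\L}(x,t,\theta)=\exp(-\beta x-\beta t+\tfrac{\min(x,t)}{2}\beta(1+\cos\theta))$ from \eqref{eqn:joint_prob_ub}. The key simplification is that on the annulus $t\ge x$ we have $\min(x,t)=x$, so the $\theta$-integral separates cleanly and $\int_0^{2\pi}\exp(\tfrac{\beta x}{2}(1+\cos\theta))\,\dd\theta=2\pi e^{\beta x/2}\IB_0(\beta x/2)$ by the standard integral representation of the modified Bessel function. This replaces the factor $e^{-\beta x}$ that appeared in the independent-case annulus term by $e^{-\beta x/2}\IB_0(\beta x/2)$; recombining with the correlation-free part then yields $\underline{g_\N}(x)$ with the extra multiplier $\tfrac{1-e^{-\beta x/2}\IB_0(\beta x/2)}{1-e^{-\beta x}}$ (the $r$ in the stated expression being $x$). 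The main obstacle is purely the bookkeeping: arranging the two integrals so that the correlation survives only on $[x,\excLN{}(x)]$, and observing that $\min(x,t)=x$ there is precisely what keeps the angular integral in the closed $\IB_0$ form rather than an incomplete one.
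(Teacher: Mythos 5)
Your reduction of the exponent of $\probserv{\N}(x)$ is correct and is essentially the argument the paper leaves implicit (it gives no proof, saying only that the bounds follow "similarly" to Lemma \ref{lemma:glxgeneralbounds}): substituting $\probln{\N,\L}=\probln{\L}(t)-\probln{\L,\L}$ and $\probln{\N,\N}=1-\probln{\L}(x)-\probln{\L}(t)+\probln{\L,\L}$ collapses all correlation dependence onto the annulus $x\le t\le \excLN{}(x)$, the $F_1$ telescoping gives the first closed form, and the observation that $\min(x,t)=x$ on that annulus is exactly what yields the $2\pi e^{\beta x/2}\IB_0(\beta x/2)$ factor in the second. Both displayed expressions come out correctly from your computation.

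The gap is in the direction of the inequalities, and it is a genuine contradiction rather than a labeling quibble. You correctly show that the exponent contains $+\tfrac{\lambda}{1-e^{-\beta x}}\int_0^{2\pi}\int_x^{\excLN{}(x)}\probln{\L,\L}\,t\,\dd t\,\dd\theta$, so $\probserv{\N}(x)$ is \emph{increasing} in $\probln{\L,\L}$. Since independent blocking is the pointwise \emph{minimum} of $\probln{\L,\L}$ by \eqref{eqn:joint_prob_lb} and the maximally correlated case is the pointwise \emph{maximum} by \eqref{eqn:joint_prob_ub}, plugging in the independent value produces a \emph{lower} bound on $\probserv{\N}(x)$ and the maximally correlated value an \emph{upper} bound --- the opposite of what your last two paragraphs assert. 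One can also check directly that the two closed forms satisfy $\underline{g_\N}(x)>\overline{\probserv{\N}}(x)$ for all $x>0$ (since $\IB_0(\beta x/2)>e^{-\beta x/2}$, the ratio multiplying the negative quantity $F_1(\excLN{}(x))-F_1(x)$ lies in $(0,1)$), so the claimed chain $\underline{g_\N}\le \probserv{\N}\le\overline{\probserv{\N}}$ is impossible. The lemma as printed assigns the two expressions to the wrong sides of the inequality relative to the monotonicity stated in its own first sentence; a correct proof must conclude $\overline{\probserv{\N}}(x)\le \probserv{\N}(x)\le \underline{g_\N}(x)$ with the given formulas (or swap the formulas). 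Your write-up reproduces the misassignment without noticing that it contradicts the monotonicity you have just established, so as written the proof does not establish the stated bounds.
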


\subsection{SINR Coverage}
\label{chap:SINR_performance}
Using the law of total probability, the SINR coverage $\Pc(\SThres)$ as defined in \eqref{eq:pcdef} is given as
\begin{align*}
 \Pc(\SThres)=A_\L\prob{\SINR>\SThres|E_\L}+A_\N\prob{\SINR>\SThres|E_\N}.
 \end{align*}
 Here, $\Pci{s}(\SThres)\stackrel{\Delta}{=}\prob{\SINR>\SThres|E_s}$ denotes the  SINR coverage  conditioned on the association with a $s$-type BS. Using conditioning on the serving BS distance $X_0$, we get
\begin{align}
    \Pci{s}(\SThres) &
     = \int_0^{\infty} 
    \prob{\SINR>\SThres|X_0=r,E_\u}
    f_{X_0|E_\u}(r)\dd r\nonumber\\
     & 
	= \int_0^{\infty}
    \prob{\frac{H_0 \ell_{\s{}}(r)}{(I+\noise)}>\SThres|X_0=r,E_\u}
    f_{X_0|E_\u}(r)\dd r\nonumber\\
     & \stackrel{(a)}{=}\int_0^{\infty} \!\!
    \expects{\Phi}{\probs{H_0}{H_0>
    \frac{\SThres{(I+\noise)}}
    {\ell_{\s{}}(r)}
    }|X_0=r,E_\u}
   f_{X_0|E_\u}(r)\dd r\nonumber\\
     &
     \stackrel{(b)}= \int_0^{\infty}
    \expects{\Phi}{\expS{-
    \fracS{\SThres{(I+\noise)}}
    {\ell_{\u{}}(r)}
    }|X_0=r,E_\u}
    f_{X_0|E_\u}(r)\dd r\nonumber\\
     &
    \stackrel{(c)}=  \frac1{A_\u}\int_0^{\infty} 2 \pi \lambda  e^{-
    \SThres{\noise}/\ell_{\u}(r)
    }
    \laplace{I|X_0=r,E_\u}{
    \left(
    \SThres/\ell_{\u{}}(r)
    \right)}
    \probserv{\u}(r)r \dd r\label{eq:pc1}
\end{align}
where $( a)$ is due to the law of conditional probability, $(b)$ uses $H_0\sim\mathsf{Exp}(1)$ and $(c)$ is due to definition of the LT. To proceed further, we would need the LT of $I$ conditioned on $X_0$ and $E_s$.

Under the first order correlation assumption, conditioned on the serving BS type, type of each BS is independent of other BSs' types. Therefore, $\Phi_\L$  and $\Phi_\N$ are mutually independent and hence, $I_\L $ and $I_\N$ are also. Therefore,
\begin{align}
\laplace{I|X_0=r,E_s}(\nu)=\laplace{I_\L|X_0=r,E_\u}(\nu)\laplace{I_\N|X_0=r,E_\u}(\nu).\label{eq:laplaceproduct}
\end{align}
The LT of $I_\v$ can be derived using PGFL of PPP and is given in the following theorem.
\begin{theorem}\label{thm:laplaceI}
Conditioned on the serving BS's type $\u$ and its distance $X_0$ from the user, the LT of the   interference $I_v$ caused by $\v$-type interferers in a 2D mmWave network is approximately given as
\begin{align}
 \laplace{I_\v|X_0=r,E_\u}(\nu) &\approx \expS{-\frac{
  \lambda}{\probln{\u}(r)} 
  \int_{\exc{\v|\u}(r)}^{\infty}
  \frac{\nu 
  \ell_\v(t)
  			}
  {1+\nu 
   \ell_\v(t)
  } 
\int_0^{2\pi}
{\probln{\u,\v}(r,t,\theta)}\; \dd\theta\; t\; \dd t }\label{eq:LTI2D}.
\end{align}
\end{theorem}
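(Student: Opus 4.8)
The plan is to evaluate $\laplace{I_\v|X_0=r,E_\u}(\nu)=\E[e^{-\nu I_\v}\mid X_0=r,E_\u]$ through the probability generating functional (PGFL) of a Poisson process, with the only non-standard ingredient being the conditional intensity of the type-$\v$ interferers, which is where the first-order correlation assumption enters. First I would insert $I_\v=\sum_{\X_i\in\Phi_\v,\,\X_i\ne\X_0}H_i\ell_\v(X_i)$ and use that the fading variables $H_i\sim\mathsf{Exp}(1)$ are independent across links, so that averaging each factor over its own fading gives
\begin{align}
\laplace{I_\v|X_0=r,E_\u}(\nu)
&=\E\!\left[\prod_{\X_i\in\Phi_\v,\,\X_i\ne\X_0}\E_{H_i}\!\left[e^{-\nu H_i\ell_\v(X_i)}\right]\;\Big|\;X_0=r,E_\u\right]\nonumber\\
&=\E\!\left[\prod_{\X_i\in\Phi_\v,\,\X_i\ne\X_0}\frac{1}{1+\nu\ell_\v(X_i)}\;\Big|\;X_0=r,E_\u\right],
\end{align}
where the last equality uses $\E_H[e^{-\nu H a}]=(1+\nu a)^{-1}$ for $H\sim\mathsf{Exp}(1)$.

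The heart of the argument is to identify the law of $\Phi_\v$ conditioned on $E_\u$ and $X_0=r$. By Slivnyak's theorem the base stations other than the tagged one still form a PPP of intensity $\lambda$ on $\R^2$. Invoking the first-order correlation assumption of Section \ref{sec:2D} — the correlation between the serving state $\s{0}$ and each interferer state $\s{i}$ is kept, while inter-interferer correlations are dropped — each remaining BS at polar position $(t,\theta)$ (measured from the user-to-serving-BS direction) is, independently of the others, of type $\v$ with the conditional probability $\probln{\u,\v}(r,t,\theta)/\probln{\u}(r)$, obtained from Bayes' rule applied to the pairwise joint LoS/NLoS probabilities. Since an independent, position-dependent thinning of a PPP is again a PPP, conditionally $\Phi_\v$ is an inhomogeneous PPP with intensity measure $\lambda\,\probln{\u,\v}(r,t,\theta)/\probln{\u}(r)\;t\,\dd t\,\dd\theta$. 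Finally the maximum-average-power association rule forbids any type-$\v$ interferer from exceeding the serving power, $\ell_\v(t)\le\ell_\u(r)$, which restricts the support to $t>\exc{\v|\u}(r)$ by the definition of the exclusion radius.

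With this conditional intensity, applying the PGFL $\E[\prod_{\X_i}f(\X_i)]=\expS{-\int(1-f(\x))\,\Lambda(\dd\x)}$ with $f(t)=(1+\nu\ell_\v(t))^{-1}$, so that $1-f(t)=\nu\ell_\v(t)/(1+\nu\ell_\v(t))$, reproduces \eqref{eq:LTI2D} once the $\theta$-independent factor $\nu\ell_\v(t)/(1+\nu\ell_\v(t))$ is pulled outside the inner $\theta$-integral of $\probln{\u,\v}$. The main obstacle — and the reason the statement is only approximate — is precisely the thinning step: the pairwise probabilities $\probln{\u,\v}$ capture the serving-to-interferer correlation exactly, but treating the interferer types as conditionally independent across BSs discards the genuine correlations among interferers sharing a common blockage, so one must argue (as the paper later verifies numerically) that this first-order truncation is sufficiently accurate. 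A minor point to check is the case $\v=\u$, where $\exc{\u|\u}(r)=r$ so that the tagged BS is automatically excluded from the integration and no double counting arises.
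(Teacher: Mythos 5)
Your proposal is correct and follows essentially the same route as the paper's Appendix G: condition on the serving type, use the first-order correlation assumption to replace each interferer's type by the conditional probability $\probln{\u,\v}(r,t,\theta)/\probln{\u}(r)$ (equivalently, an independent position-dependent thinning of the PPP outside the exclusion radius $\exc{\v|\u}(r)$), then apply the PGFL together with the MGF of the $\mathsf{Exp}(1)$ fading. Your write-up is in fact slightly more explicit than the paper's about where the thinning/independence approximation enters and why the result is only approximate.
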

\begin{proof}
See Appendix \ref{app:thm:laplaceI}.
\end{proof}

\begin{remark}
It can be shown that under LoS association, the interference from LoS BSs increases statistically while NLOS interference decreases when blocking correlation is present, in comparison to the independent blocking assumption. An opposite behavior is seen under NLOS association.
\end{remark}
Now, substituting \eqref{eq:LTI2D} in \eqref{eq:pc1}, we get the SINR coverage   as given in the following theorem.

\begin{theorem}
\label{theorem: SINR}
The SINR coverage  $\Pc(\SThres)$ of the typical user in a 2D mmWave cellular network in the presence of blockage correlation is approximately given as
\begin{align}
 \Pc(\SThres)\! \approx \! \!\sum_{\u\in\{\L,\N\}}\int_0^{\infty} 2 \pi \lambda  e^{-\SThres{\noise}/\ell_\u(r)
    }
    \Lfunc_{\L|\u}(r,\SThres)\,\Lfunc_{\N|\u}(r,\SThres)
   \, \probserv{\u}(r)r \dd r.\label{eq:PcFinalEq}
    \end{align}
    with $g(s)$ given in \eqref{eq:glx} and \eqref{eq:gNx2D}, and 
    \begin{align}
 \Lfunc_{\v|\u}(r,\tau)\stackrel{\Delta}{=} \laplace{I_\v|X_0=r,E_\u}\left(
    \frac{\SThres{}}{\ell_s(r)}
    \right)=\expS{-
  \int_{\exc{\v|\u}(r)}^{\infty}
  \frac{
  \lambda
  			}
  {1+
  \SThres^{-1} 
  \frac{\ell_\u(r)}{\ell_\v(t)}
  } 
\int_0^{2\pi}
\frac{\probln{\u,\v}(r,t,\theta)}
{\probln{\u}(r)} 
 \dd\theta t \dd t }\label{eq:kvs}.
\end{align}
\end{theorem}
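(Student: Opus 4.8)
The plan is to assemble the claimed expression from results already in hand: the law of total probability over the association type, the conditional coverage integral \eqref{eq:pc1}, the conditional independence factorization \eqref{eq:laplaceproduct}, and Theorem \ref{thm:laplaceI} for the per-type interference Laplace transform. The only genuinely new content is bookkeeping, since the heavy lifting (the PGFL computation that produces the joint LoS probability $\probln{\u,\v}$ inside the integral) is deferred to Theorem \ref{thm:laplaceI} and its appendix. I would therefore present this as an assembly argument rather than a from-scratch derivation.

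First I would start from the total-probability split $\Pc(\SThres)=A_\L\,\Pci{\L}(\SThres)+A_\N\,\Pci{\N}(\SThres)$, reducing the task to each conditional coverage $\Pci{\u}(\SThres)=\prob{\SINR>\SThres\mid E_\u}$. For each $\u$ I would reproduce the chain $(a)$–$(c)$ that yields \eqref{eq:pc1}: condition on the serving distance $X_0=r$ and on $E_\u$, write the coverage event as $\{H_0\ell_\u(r)>\SThres(I+\noise)\}$, and use $H_0\sim\mathsf{Exp}(1)$ to evaluate the inner probability as $\expS{-\SThres(I+\noise)/\ell_\u(r)}$. Pulling out the deterministic noise factor $e^{-\SThres\noise/\ell_\u(r)}$ and averaging the surviving $\expS{-\SThres I/\ell_\u(r)}$ over $\Phi$ identifies the interference Laplace transform evaluated at $\nu=\SThres/\ell_\u(r)$. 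Integrating against the conditional serving-distance density $f_{X_0\mid E_\u}(r)=\tfrac{1}{A_\u}2\pi\lambda\,\probserv{\u}(r)\,r$ reproduces \eqref{eq:pc1}, which carries the prefactor $1/A_\u$.

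The key simplification is then that the weight $A_\u$ in the total-probability sum cancels exactly against the $1/A_\u$ inside \eqref{eq:pc1}, so that $A_\u\,\Pci{\u}(\SThres)=\int_0^\infty 2\pi\lambda\,e^{-\SThres\noise/\ell_\u(r)}\,\laplace{I\mid X_0=r,E_\u}\!\left(\SThres/\ell_\u(r)\right)\probserv{\u}(r)\,r\,\dd r$, and summing over $\u\in\{\L,\N\}$ already matches the outer structure of \eqref{eq:PcFinalEq}. Into this I would substitute the factorization \eqref{eq:laplaceproduct}, which holds because under the first-order correlation assumption the conditional type of each interferer is independent across BSs, rendering $\Phi_\L,\Phi_\N$ and hence $I_\L,I_\N$ conditionally independent given $(X_0,E_\u)$. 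Replacing each factor by Theorem \ref{thm:laplaceI} evaluated at $\nu=\SThres/\ell_\u(r)$ gives exactly the products $\Lfunc_{\L\mid\u}(r,\SThres)\,\Lfunc_{\N\mid\u}(r,\SThres)$ as defined in \eqref{eq:kvs}, completing the derivation.

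The step requiring the most care is the propagation of arguments in the last two steps rather than any hard estimate: one must check that the Laplace transform is evaluated at the \emph{serving} type's path loss $\SThres/\ell_\u(r)$, so that the ratio $\ell_\u(r)/\ell_\v(t)$ emerges correctly inside $\Lfunc_{\v\mid\u}$, and that the lower integration limit is the exclusion radius $\exc{\v\mid\u}(r)$ dictated by the maximum-average-power association rule (not the naive $r$). Because the substantive conditional Laplace-transform computation is supplied by Theorem \ref{thm:laplaceI}, the present theorem reduces to verifying this argument-propagation together with the $A_\u$ cancellation, after which \eqref{eq:PcFinalEq} follows immediately.
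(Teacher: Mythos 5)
Your proposal is correct and follows exactly the route the paper takes: the paper derives \eqref{eq:pc1} via the same chain $(a)$--$(c)$, invokes the conditional independence factorization \eqref{eq:laplaceproduct} under the first-order correlation assumption, and obtains Theorem \ref{theorem: SINR} by substituting Theorem \ref{thm:laplaceI} into \eqref{eq:pc1} with the $A_\u$ weights cancelling. Your added attention to evaluating the Laplace transform at $\SThres/\ell_\u(r)$ and to the exclusion radius $\exc{\v|\u}(r)$ matches the paper's definitions and introduces no discrepancy.
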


\subsection{Rate Coverage}
If the typical user is allotted a bandwidth of $B_\ue$, the instantaneous rate  is 
\begin{align}
	\Rate = B_\ue \log(1+\SINR) \label{eq: def_rate}.
\end{align}
Since tagged BS is associated with multiple users, the BS need to divide the available time-frequency resources among all  connected users. Therefore, the value of $B_\ue$ depends on the number of users in the tagged cell which is a RV. It is shown that taking average number of users does not affect the result significantly \cite{bai2013Globesip}, \cite{larsson2014massive}. For bandwidth allocation to users, we consider the following two cases.
 
\subsubsection{Equal allocation}
In this case, each BS divides the available bandwidth equally among all UEs.  Therefore, the bandwidth allocated to the typical UE of the tagged BS is given as
\begin{align}
	B_\ue = \frac{B}{N_\ue}.
\end{align}
Here, $N_\ue$ denotes the mean number of users connected to the tagged BS and is given as \cite{SinKulGhoAnd2015,singh2013offloading}:
\begin{align}
	N_\ue = 1+1.28\frac{\lambda_u}{\lambda}
\end{align}
From  \eqref{eq: def_rate}, we can derive the rate coverage which is given in the following Lemma.
\begin{lemma}\label{lem:rc2D}
The rate coverage of the typical user in a 2D mmWave cellular network under equal bandwidth allocation to all users is given as
\begin{align}
	\Rc(\rho) &= \prob{B_u\log(1+\SINR) > \rho} = \prob{\frac{B}{N_u}\log(1+\SINR) > \rho}\nonumber \\
	&= \prob{\SINR > 2^{\frac{\rho N_u}{B}}-1} 
	= \Pc(2^{\frac{\rho N_u}{B}}-1)
\end{align}
where $\Pc(\cdot)$ is given in Theorem \ref{theorem: SINR}.
\end{lemma}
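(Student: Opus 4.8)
The plan is to obtain $\Rc(\rho)$ by a direct monotone transformation of the SINR coverage $\Pc$, exploiting the fact that the instantaneous rate is a deterministic increasing function of the $\SINR$. This reduces the claim to the definition of $\Pc$ in \eqref{eq:pcdef} together with Theorem \ref{theorem: SINR}, so no new probabilistic computation is required.

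First I would start from the rate-coverage definition $\Rc(\rho)=\prob{\Rate>\rho}$ and substitute the instantaneous rate $\Rate=B_\ue\log(1+\SINR)$ from \eqref{eq: def_rate} together with the equal-allocation bandwidth $B_\ue=B/N_\ue$. This rewrites the coverage event as $\{(B/N_\ue)\log(1+\SINR)>\rho\}$. Treating $B$ and $N_\ue$ as deterministic constants, I would then isolate the $\SINR$: dividing through by $B/N_\ue$ and exponentiating (with base $2$, consistent with the $2^{\rho N_\ue/B}$ appearing in the statement) yields the equivalent event $\{\SINR>2^{\rho N_\ue/B}-1\}$. Its probability is, by the very definition of $\Pc$ in \eqref{eq:pcdef}, exactly $\Pc(2^{\rho N_\ue/B}-1)$, whose closed form is then supplied by Theorem \ref{theorem: SINR}.

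The only point worth flagging is that two conventions are being used implicitly: $\log$ denotes $\log_2$, so that exponentiation cancels the logarithm and produces the stated power of $2$; and $N_\ue$ is the \emph{mean} number of users rather than the random cell load, so that it may be pulled out of the probability as a constant. The latter is justified in the surrounding text by the cited observation that replacing the random user count with its mean has negligible effect on the result. Under these conventions the argument is a one-line equivalence of events, and there is no genuine analytical obstacle.
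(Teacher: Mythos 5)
Your proposal is correct and follows exactly the paper's own derivation: substitute $\Rate=B_\ue\log(1+\SINR)$ with $B_\ue=B/N_\ue$, invert the monotone transformation treating the mean load $N_\ue$ as a constant, and identify the resulting event probability with $\Pc(2^{\rho N_\ue/B}-1)$ from Theorem \ref{theorem: SINR}. The conventions you flag (base-2 logarithm and replacing the random user count by its mean) are precisely the ones the paper relies on, so there is no gap.
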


\subsubsection{LoS allocations only}
Since NLoS users can have very poor SINR, an equal allocation to NLoS and LoS users may not be a good idea. Here, we consider that each BS divides available bandwidth  among only LoS UEs and NLoS UEs are not served. Further, the average number of LoS UEs $N_{\L u}$ connected to the tagged BS and is given by \cite{singh2013offloading}
\begin{align}
	N_{\L \ue} = 1+1.28\frac{\lambda_\ue A_\L}{\lambda}.
\end{align}
Hence, the bandwidth allocated to any LoS UE of the tagged BS is given as:
\begin{align}
	B_{\L u} = \frac{B}{N_{\L u}}.
\end{align}
Now, using the law of total probability, the rate coverage is given as
\begin{align}
	\Rc(\rho) &= \prob{\Rate > \rho} = A_\L \prob{\Rate > \rho|E_\L}+  A_\N \prob{\Rate > \rho|E_\N} \nonumber\\
	&= A_\L\prob{\frac{B}{N_{\L \ue}}\log(1+\SINR) > \rho|E_\L} = A_\L\prob{\SINR > 2^{\frac{\rho N_{\L \ue}}{B}}-1|E_\L} = A_\L\Pci{\L}\left(2^{\frac{\rho N_{\L \ue}}{B}}-1\right) \label{eq: onlyLoSband}
\end{align}
where $\Pci{\L}$ is given in \eqref{eq:pc1}.

\subsection{Special Case: \LAOP model }
\label{sec:2DSpecialCase}
In the last subsection, we derived the SINR coverage for the general path-loss model. In this subsection, we consider the special case \LAOP model where $\ell_\N(r)$ is very low  for all values of  $r$ resulting in the complete outage for all NLoS links. This special case helps us  simplify expressions further.

Note that $\excNL{}(r)=0$ for this case. Therefore, from \eqref{eq:glx}, 
\begin{align}
\probserv{\L}(x) & = \probln{\L}({x}) 
\expS{-\lambda\int_0^{2\pi} \int_0^{{x}}\frac{\probln{\L,\L}({x},t,\theta)}{\probln{\L}({x})}t\dd t \dd \theta}.\label{eq:probserv:LOSOnly}
\end{align}
Now, 
\begin{align}
  &\Lfunc_{\L|\L}(r,\tau)\!=\!\expS{-\!\!
  \int_{r}^{\infty}
  \frac{
  \lambda
  			}
  {1+
  \SThres^{-1}
  \ell_\L(r)/\ell_\L(t)
  } 
\int_0^{2\pi}
\frac{\probln{\L,\L}(r,t,\theta)}
{\probln{\L}(r)} 
 \dd\theta t \dd t },\label{eq:LTI:LOSOnly}
\end{align}
and $\Lfunc_{\N|\L}(r,\SThres)=1$. Therefore, from \eqref{eq:PcFinalEq}, the SINR coverage  is given as
\begin{align}
 \Pc(\SThres) = \int_0^{\infty} 2 \pi \lambda  e^{-
   \SThres{\noise}/
 	\ell_\L(r)
    }\ 
    \Lfunc_{\L|\L}(r,\SThres)\ 
    \probserv{\L}(r)r \dd r.\label{eq:PcFinalEqLOSOnly}
    \end{align}

\noindent The LoS association probability simplifies to 
\begin{align}
&=2\pi \lambda \int_0^{\infty} \probln{\L}({x}) \expS{-\lambda\int_0^{2\pi} \int_0^{{x}}\frac{\probln{\L,\L}({x},t,\theta)}{\probln{\L}({x})}t\dd t \dd \theta} x \dd x \label{eq: A_simp}.
\end{align}
\noindent Since the NLOS links are in complete outage, the BS will not allocate any resources to the NLoS UEs. Therefore, from \eqref{eq: onlyLoSband}, the rate coverage  is given as:
\begin{align}
	R_{\L}(\rho) &= A_\L\Pci{\L}
	\left(2^{\frac{\rho N_{\L u}}{B}}-1\right)
	=\Pc\left(2^{\frac{\rho N_{\L u}}{B}}-1\right).
\end{align}
where is $N_{\L u} = 1+1.28 \fracS{\lambda_u A_\L}{\lambda}$.

We will now present simplified bounds on $\probserv{\L}(r)$, $A_\L$, and $\Lfunc_{\L|\L}(r,\SThres)$  in the following lemmas.


\begin{lemma}
\label{theorem:P_ub}
In a 2D mmWave cellular network with LAP model, the function 
$\probserv{\L}(r)$,  the LoS association probability $A_\L$ and the LT of LoS interference ${\Lfunc_{\L|\L}}(r,\tau)$  are upper bounded as
\begin{align}
  	\probserv{\L}(r)			&	\le \overline{\probserv{\L}}(r)
  	= \expS{-\beta r-\frac{2\pi \lambda}{\beta^2} 
			\big(1-(1+\beta r)e^{-\beta r}\big)}\label{eq:G_Ind},\\
	A_\L 						&	\le \overline{A}_\L
	= 1-\exp\left(-\fracS{2\pi\lambda}{\beta^2} \right)\label{eq:A_Ind},\\
	{\Lfunc_{\L|\L}}(r,\tau) 	&	\le \overline{\Lfunc_{\L|\L}}(r,\tau)  
	=\expS{-
				  \int_{r}^{\infty}
				  \frac{
				  2 \pi \lambda \exp(-\beta t)
				  			}
				  {1+
				  \SThres^{-1}
				 \ell_\L(r)/\ell_\L(t)
				  }  t \dd t } \label{eq:L_Ind}.
\end{align}
where  $\overline{\probserv{\L}}(r)$,  $\overline{A}_\L$ and $\overline{\Lfunc_{\L|\L}}(r,\tau)$ correspond respectively to the values of $\probserv{\L}(r)$,  $A_\L$ and  ${\Lfunc_{\L|\L}}(r,\tau)$  under the independent blockage assumption. 
\end{lemma}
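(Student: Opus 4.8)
The plan is to derive all three bounds from the single inequality \eqref{eqn:joint_prob_lb}, namely $\probln{\L,\L}(r_1,r_2,\theta)\ge\exp(-\beta r_1-\beta r_2)$, which states that the independent blocking case is the least-correlated extreme and hence minimizes the joint LoS probability. Since each of $\probserv{\L}$, $A_\L$ and $\Lfunc_{\L|\L}$ is of the form $\exp(-\,\text{integral of }\probln{\L,\L})$, a lower bound on $\probln{\L,\L}$ flips, through the minus sign in the exponent, into an upper bound on each quantity. This is exactly the monotonicity direction already recorded in Lemma \ref{lemma:glxgeneralbounds}, so the three claims are the LAP specializations of that principle.

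For the first bound I would start from the LAP expression \eqref{eq:probserv:LOSOnly} and substitute \eqref{eqn:joint_prob_lb} into its exponent, using $\probln{\L,\L}(x,t,\theta)/\probln{\L}(x)\ge e^{-\beta t}$. The angular integral then collapses to a factor $2\pi$ and the remaining radial integral is elementary, $\int_0^x t e^{-\beta t}\dd t=\beta^{-2}\big(1-(1+\beta x)e^{-\beta x}\big)$, which reproduces \eqref{eq:G_Ind} after multiplying by $\lambda$ and prepending $\probln{\L}(x)=e^{-\beta x}$. Equivalently, this is simply Lemma \ref{lemma:glxgeneralbounds} evaluated at $\excNL{}(x)=0$, which holds in the LAP model, since $F_1(0)=1$ kills the $\excNL{}$ terms.

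For the association bound I would integrate the pointwise bound just obtained, $A_\L=2\pi\lambda\int_0^\infty \probserv{\L}(x)x\dd x\le 2\pi\lambda\int_0^\infty \overline{\probserv{\L}}(x)x\dd x$. The observation that renders this integral exact is that, writing $u(x)=\frac{2\pi\lambda}{\beta^2}\big(1-(1+\beta x)e^{-\beta x}\big)$ so that $\overline{\probserv{\L}}(x)=e^{-\beta x}e^{-u(x)}$, one has $u'(x)=2\pi\lambda\,x\,e^{-\beta x}$; hence the integrand equals $u'(x)e^{-u(x)}=-\tfrac{\dd}{\dd x}e^{-u(x)}$ and the integral telescopes to $e^{-u(0)}-e^{-u(\infty)}=1-e^{-2\pi\lambda/\beta^2}$, which is \eqref{eq:A_Ind}. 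Recognizing this exact antiderivative is the one nonmechanical step, and I regard it as the main obstacle, modest as it is.

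For the Laplace-transform bound I would repeat the substitution of \eqref{eqn:joint_prob_lb} into the exponent of \eqref{eq:LTI:LOSOnly}. Again $\probln{\L,\L}(r,t,\theta)/\probln{\L}(r)\ge e^{-\beta t}$, so the angular integral yields $2\pi e^{-\beta t}$ and the exponent is lower-bounded by $\int_r^\infty \frac{2\pi\lambda e^{-\beta t}}{1+\SThres^{-1}\ell_\L(r)/\ell_\L(t)}\,t\dd t$, giving \eqref{eq:L_Ind}. There is no genuine difficulty here; the only care required throughout is to verify that each exponent carries a minus sign, so that the uniform lower bound on $\probln{\L,\L}$ consistently produces an upper bound, with equality attained in the vanishing-blockage-length limit where \eqref{eqn:joint_prob_lb} becomes an identity and the independent interpretation of $\overline{\probserv{\L}}$, $\overline{A}_\L$ and $\overline{\Lfunc_{\L|\L}}$ is exact.
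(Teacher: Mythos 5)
Your proposal is correct and follows essentially the same route as the paper: each bound comes from substituting the independent-blocking lower bound \eqref{eqn:joint_prob_lb} into the exponents of \eqref{eq:probserv:LOSOnly}, \eqref{eq:LOS_assoc_prob} and \eqref{eq:LTI:LOSOnly} (equivalently, setting $\excNL{}=0$ in Lemma \ref{lemma:glxgeneralbounds}), with the sign flip through the exponential turning it into an upper bound. Your explicit telescoping computation $2\pi\lambda x\,\overline{\probserv{\L}}(x)=-\tfrac{\dd}{\dd x}e^{-u(x)}$ correctly supplies the closed form \eqref{eq:A_Ind} that the paper's proof leaves implicit.
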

\begin{IEEEproof} \eqref{eq:G_Ind} can be obtained by substituting $\excNL{}=0$ in \eqref{eq:G_Ind_Gen}. \eqref{eq:A_Ind} can be obtained by substituting the bound on $g_\L(r)$ as given in \eqref{eq:G_Ind} into \eqref{eq:LOS_assoc_prob}. \eqref{eq:L_Ind} can be derived by substituting  \eqref{eqn:joint_prob_lb} in \eqref{eq:LTI:LOSOnly}.
\end{IEEEproof}

\begin{lemma}
\label{theorem:P_lb}
In a 2D mmWave cellular network with LAP model, the function 
$\probserv{\L}(r)$,  the LoS association probability $A_\L$ and the LT of LoS interference ${\Lfunc_{\L|\L}}(r,\tau)$ are lower bounded as
\begin{align}
    \probserv{\L}(r) &\geq\underline{g_\L}(r)  =e^{-\beta r} \expS{-4 \lambda \int_0^r \int_0^{\pi/2} t \expU{-\beta t \sin^2 \theta } \dd \theta \dd t}\label{eq:G_Inf_p1}\\
 &\hspace{.55in}=
  \expS{-\beta r-
  \frac13 \frac{2\pi\lambda}\beta r e^{-\beta r/2}\left(
  \beta r\IB_0(\beta r/2)+(\beta r+2)\IB_1(\beta r/2)
  \right)
  },\label{eq:G_Inf}
\\
  A_\L &\geq \underline{A_\L}= 2\pi \lambda \int_0^{\infty} \expS{-\beta x
-4\lambda\int_0^{\pi/2} \int_0^{{x}}\expS{-\beta t \sin^2(\theta)} t\dd t \dd \theta} x \dd x, \label{eq:A_Lower}
  \end{align}
  \begin{align}
{\Lfunc_{\L|\L}}(r,\tau) &
	\geq \underline{\Lfunc_{\L|\L}}(r,\tau) =
		\expS{-
		  \int_{r}^{\infty}
		  \frac{
		  \lambda \exp(-\beta t)
		  			}
		  {1+
		  \SThres^{-1}
		  \ell_\L(r)/\ell_\L(t)
		  } 
		 t \dd t \ \cdot
		4 \int_0^{\pi/2}
		\expS{r\beta \cos^2(\theta)}
		 \dd\theta}\nonumber\\
		 &= 
\expS{-
		  \int_{r}^{\infty}
		  \frac{
		  \lambda \exp(-\beta t)
		  			}
		  {1+
		  \SThres^{-1}
		  \ell_\L(r)/\ell_\L(t)
		  } 
		 t \dd t \ \cdot
		2 \pi e^{r\beta/2}\IB_0(r\beta/2)}
		  \label{Eq: L_Lower},
 \end{align}
  which correspond to the values for the maximally correlated case.
\end{lemma}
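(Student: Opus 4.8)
The plan is to obtain all three lower bounds by a single substitution: replace the joint LoS probability $\probln{\L,\L}$ appearing in the LAP-model expressions \eqref{eq:probserv:LOSOnly} for $\probserv{\L}$ and \eqref{eq:LTI:LOSOnly} for $\Lfunc_{\L|\L}$ by its maximally-correlated upper bound \eqref{eqn:joint_prob_ub}, and then use the sign structure of the exponentials to turn the inequality into a lower bound. Since $\probserv{\L}$ decreases as the blocking correlation grows (Lemma \ref{lemma:glxgeneralbounds}) and, under LoS association, the LoS interference increases with correlation (the remark after Theorem \ref{thm:laplaceI}) so that $\Lfunc_{\L|\L}$ decreases, both quantities are minimized in the infinite-length (maximally correlated) regime captured by \eqref{eqn:joint_prob_ub}; hence the substitution yields valid lower bounds. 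The bound \eqref{eq:A_Lower} on $A_\L$ then follows immediately by inserting $\underline{\probserv{\L}}$ into the association integral \eqref{eq:LOS_assoc_prob}, which is monotone in its integrand.

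For the integral form \eqref{eq:G_Inf_p1} I would start from \eqref{eq:probserv:LOSOnly} and note that in the inner integral the interferer distance satisfies $t<x$, so $\min(x,t)=t$ in \eqref{eqn:joint_prob_ub}. Using $\probln{\L}(x)=e^{-\beta x}$, the ratio $\probln{\L,\L}(x,t,\theta)/\probln{\L}(x)$ is bounded above by $\exp(-\beta t + \tfrac{t\beta}{2}(1+\cos\theta))$, which the half-angle identity $1-\cos\theta=2\sin^2(\theta/2)$ collapses to $\exp(-\beta t\sin^2(\theta/2))$. The substitution $\phi=\theta/2$ then converts $\int_0^{2\pi}\dd\theta$ into $4\int_0^{\pi/2}\dd\phi$, producing exactly \eqref{eq:G_Inf_p1}.

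The main work is the closed form \eqref{eq:G_Inf}. Here I would integrate over the angle first: writing $\sin^2\theta=(1-\cos2\theta)/2$ and applying the Bessel representation $\IB_0(z)=\tfrac1\pi\int_0^\pi e^{z\cos u}\dd u$ gives $\int_0^{\pi/2}e^{-\beta t\sin^2\theta}\dd\theta=\tfrac\pi2 e^{-\beta t/2}\IB_0(\beta t/2)$. The remaining radial integral $\int_0^r t\,e^{-\beta t/2}\IB_0(\beta t/2)\dd t$ is the crux. I would evaluate it via $u=\beta t/2$ and the antiderivative identity $\int_0^z u e^{-u}\IB_0(u)\dd u=\tfrac13 z e^{-z}\big[z\IB_0(z)+(z+1)\IB_1(z)\big]$, which is established from the recurrences $\IB_0'=\IB_1$ and $(u\IB_1)'=u\IB_0$: these let one relate $\int u e^{-u}\IB_0$ and $\int u e^{-u}\IB_1$ through a small linear system whose solution produces the factor $\tfrac13$. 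Reinstating the constants $2\pi$ and $4/\beta^2$ reproduces $F(r,\beta)$ of Lemma \ref{lemma:glxgeneralbounds} and hence \eqref{eq:G_Inf}. This Bessel-integral evaluation is the principal obstacle; the rest is bookkeeping.

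Finally, for the interference bound \eqref{Eq: L_Lower} I would substitute \eqref{eqn:joint_prob_ub} into \eqref{eq:LTI:LOSOnly}, noting that now $t>r$ so $\min(r,t)=r$ and the angular factor $\exp(\tfrac{r\beta}{2}(1+\cos\theta))$ is independent of $t$; it therefore factors out of the radial integral. The angular part equals $4\int_0^{\pi/2}\exp(r\beta\cos^2\theta)\dd\theta=2\pi e^{r\beta/2}\IB_0(r\beta/2)$ by the same Bessel representation, giving both forms in \eqref{Eq: L_Lower}. Since $\Lfunc_{\L|\L}=\exp(-(\cdot))$ and the substitution upper-bounds the integrand entering the exponent with a minus sign, the outcome is a lower bound on $\Lfunc_{\L|\L}$, which is the maximally correlated value, as claimed.
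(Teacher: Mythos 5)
Your proof is correct and follows essentially the same route as the paper: substitute the maximally-correlated upper bound \eqref{eqn:joint_prob_ub} into the exponents of \eqref{eq:probserv:LOSOnly} and \eqref{eq:LTI:LOSOnly} and evaluate the angular integrals via $\int_0^{\pi/2}e^{-a\sin^2\theta}\dd\theta=\tfrac{\pi}{2}e^{-a/2}\IB_0(a/2)$, which is exactly how the paper proceeds (it merely obtains \eqref{eq:G_Inf_p1}--\eqref{eq:G_Inf} by setting $\excNL{}=0$ in the already-proved Lemma \ref{lemma:glxgeneralbounds} rather than re-deriving them). Your explicit antiderivative $\int_0^z ue^{-u}\IB_0(u)\dd u=\tfrac13 ze^{-z}\left[z\IB_0(z)+(z+1)\IB_1(z)\right]$ checks out and supplies the step the paper compresses into ``further integrating it.''
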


\begin{IEEEproof} 
To  get \eqref{eq:G_Inf_p1} and \eqref{eq:G_Inf}, we substitute $\excNL{}=0$ in \eqref{eq:G_Inf_p1_Gen} and \eqref{eq:G_Inf_Gen} respectively. \eqref{eq:A_Lower} can be obtained by substituting the values of bounds on $g_\L(r)$ in \eqref{eq:LOS_assoc_prob} and \eqref{Eq: L_Lower} can be obtained by substituting the bound as given in \eqref{eqn:joint_prob_ub} into \eqref{eq:LTI:LOSOnly}.
\end{IEEEproof}

\begin{lemma}\label{lemma:simpleboundgL2D}
In a 2D mmWave cellular network with LAP model, 
further  simplified bounds for $\underline{g_\L}(r)$ are given as 
 \begin{align}
 & \underline{g_\L}(r) \ge\underline{g_1}(r)= \expS{-\beta r 
	-\frac{4\lambda}{m\beta^2(1-\frac{m\pi}{2}) } 
	\left(\frac{m\pi}{2} 
      -\expU{ -\beta r(1-\frac{m\pi}{2})}\right)
      +\frac{4\lambda}{m\beta^2
      } 
      		e^{-\beta r}
      }
      \label{eq:G_lb}\\
\text{and }&\underline{g_\L}(r)\leq\underline{g_2}(r) =
 \expS{-\beta r -\frac{4\lambda}{m^2\beta^2} \left(m\beta r 
     - \frac{2}{\pi}\left(1 -\expU{- \frac{m \beta r \pi}{2}}
     \right)
     \right)}
     \label{eq:G_ub}
\end{align}
where $m=1.38$.
\end{lemma}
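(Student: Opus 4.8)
The plan is to start from the closed form \eqref{eq:G_Inf_p1}, namely
\[
\underline{g_\L}(r)=e^{-\beta r}\expS{-4\lambda\, J(r)},\qquad J(r)=\int_0^r\int_0^{\pi/2} t\, \expU{-\beta t\sin^2\theta}\,\dd\theta\,\dd t,
\]
and to bound the double integral $J(r)$ from both sides. Because $\underline{g_\L}(r)$ depends on $J(r)$ through the \emph{decreasing} map $J\mapsto e^{-\beta r}e^{-4\lambda J}$, an \emph{upper} bound on $J(r)$ will yield the \emph{lower} bound $\underline{g_1}(r)$, while a \emph{lower} bound on $J(r)$ will yield the \emph{upper} bound $\underline{g_2}(r)$. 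Keeping this reversal straight is the only conceptual subtlety; the rest is an elementary evaluation.

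The key device is to sandwich $\sin^2\theta$ between two \emph{parallel} linear functions of $\theta$ on $[0,\pi/2]$ of slope $m=1.38$:
\[
m\theta+\Big(1-\tfrac{m\pi}{2}\Big)\ \le\ \sin^2\theta\ \le\ m\theta,\qquad \theta\in[0,\tfrac{\pi}{2}].
\]
Both inequalities follow from a one-line monotonicity argument. For the upper bound I would set $h(\theta)=m\theta-\sin^2\theta$; then $h(0)=0$ and $h'(\theta)=m-\sin2\theta\ge m-1>0$, so $h\ge0$. For the lower bound I would set $\phi(\theta)=\sin^2\theta-m\theta$; then $\phi'(\theta)=\sin2\theta-m\le1-m<0$, so $\phi$ is decreasing and $\phi(\theta)\ge\phi(\pi/2)=1-\tfrac{m\pi}{2}$, which is exactly the claimed lower line. (Any $m\ge1$ makes both signs definite; $m=1.38$ is simply the value used to report the bounds.)

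Substituting $\sin^2\theta\le m\theta$ into $J(r)$ gives a lower bound: the inner integral becomes $\int_0^{\pi/2}e^{-\beta tm\theta}\dd\theta=\frac{1-e^{-\beta t m\pi/2}}{\beta t m}$, the factor $t$ cancels, and the remaining $t$-integral $\frac1{\beta m}\int_0^r\big(1-e^{-\beta t m\pi/2}\big)\dd t$ is evaluated directly; inserting the result into $e^{-\beta r}e^{-4\lambda J}$ reproduces $\underline{g_2}(r)$ of \eqref{eq:G_ub}. Substituting the lower line $\sin^2\theta\ge m\theta+(1-\tfrac{m\pi}{2})$ gives an upper bound on $J(r)$; here the inner integral is $e^{-\beta t(1-m\pi/2)}\frac{1-e^{-\beta t m\pi/2}}{\beta t m}$, and the crucial algebraic simplification is that the intercept was chosen so that $(1-\tfrac{m\pi}{2})+m\cdot\tfrac{\pi}{2}=1$; hence the second exponential collapses to $e^{-\beta t}$, the $t$-integral splits as $\int_0^r\big(e^{-\beta t(1-m\pi/2)}-e^{-\beta t}\big)\dd t$, and evaluating the two elementary integrals yields $\underline{g_1}(r)$ of \eqref{eq:G_lb}.

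The main obstacle is not the integration—which is routine once the linear bounds are in place—but the correct handling of the lower linear bound. Since $m\theta+(1-\tfrac{m\pi}{2})$ is negative for $\theta<\tfrac{\pi}{2}-\tfrac1m$, the pointwise inequality $e^{-\beta t\sin^2\theta}\le e^{-\beta t(m\theta+1-m\pi/2)}$ is crude on that range (its right-hand side exceeds $1$), yet it remains valid, so the resulting upper bound on $J(r)$ and hence $\underline{g_1}(r)$ are legitimate, if loose for large $r$. Thus the only steps demanding genuine care are establishing the two linear bounds with $m=1.38$ and threading the inequality directions through the decreasing exponential map; everything else reduces to a direct computation.
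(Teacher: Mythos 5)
Your overall route is exactly the paper's: start from \eqref{eq:G_Inf_p1}, sandwich $\sin^2\theta$ between the two parallel lines $m\theta+\bigl(1-\tfrac{m\pi}{2}\bigr)\le\sin^2\theta\le m\theta$ on $[0,\pi/2]$ (the paper's \eqref{eqn:sin_sqrd_bound}), push the inequalities through the decreasing map $J\mapsto e^{-4\lambda J}$, and evaluate the resulting elementary integrals; your bookkeeping of which linear bound produces which side of the sandwich, and the observation that the intercept is chosen so that $(1-\tfrac{m\pi}{2})+m\cdot\tfrac{\pi}{2}=1$ collapses the cross term to $e^{-\beta t}$, all match Appendix~\ref{appendix:sin1}.

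The one substantive divergence is how the linear sandwich itself is justified, and here your argument covers less than the paper's. Your monotonicity proof ($h'(\theta)=m-\sin2\theta\ge m-1>0$, and its mirror for the lower line) is valid only for $m\ge1$, so it certifies the sandwich with slope $m=1.38$ taken literally. The paper's appendix, however, actually works with slope $1/1.38\approx0.725$: it locates the interior maximum of $f(x)=\sin^2x-mx$ via $\sin 2x=m$, evaluates $\max f=\tfrac12\bigl(1+\sqrt{1-m^2}-m(\pi-\sin^{-1}m)\bigr)\approx-0.00006$ for $m=1/1.38$ (note $\sqrt{1-m^2}$ only makes sense for $m<1$), and the accompanying figure plots $y=x/1.38$. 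Since $\max_{x\in(0,\pi/2]}\sin^2x/x=1/1.38$, this is the tightest admissible slope, and it is the value that makes the displayed bounds nontrivial; the ``$m=1.38$'' in the lemma statement is evidently a typo for $1/1.38$. For $m<1$ your derivative argument fails ($m-\sin2\theta$ changes sign on $[0,\pi/2]$), so the optimization step the paper performs is genuinely needed; with only $m=1.38$ your bounds are correct but far looser than intended. Two smaller points: both inequalities of the sandwich reduce to the single statement $\sin^2x\le mx$ under $\theta\mapsto\pi/2-\theta$, so one optimization suffices; and if you carry out the final integration for $\underline{g_1}$ you will obtain $-\tfrac{4\lambda}{m\beta^2}e^{-\beta r}$ as the last term in the exponent rather than the $+$ sign printed in \eqref{eq:G_lb} (the printed version gives $\underline{g_1}(0)>1=\underline{g_\L}(0)$, so the displayed sign is itself a typo), which is worth being aware of when you claim your computation ``reproduces'' the stated formula.
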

\begin{IEEEproof}
See Appendix \ref{appendix:sin1}.
\end{IEEEproof}

Using bounds presented in Lemma \ref{theorem:P_ub} and \ref{theorem:P_lb} in \eqref{eq:PcFinalEqLOSOnly}, we get the following bounds on the SINR coverage. 

\begin{theorem}\label{thm:PcSpecialCase}
In a 2D mmWave cellular network with \LAOP model,
the SINR coverage  $\Pc$ decreases with an increase in the blockage correlation. Further, $\Pc$ can be bounded as
\begin{align}
\underline{\Pc}(\SThres) \le \Pc(\SThres) \le \overline{\Pc}(\SThres)
\end{align}
where  $\overline{\Pc}$ and $\underline{\Pc}$ are  the values of the SINR coverage  under the independent blocking  and  maximally correlated case respectively. These are given as
\begin{align}
 \overline{\Pc} (\SThres)= \int_0^{\infty} 2 \pi \lambda  e^{-
 {\SThres{\noise}r^{\alpha_{\L}}}/
    {\cgain{\L}}
    }\ 
   \overline{ \Lfunc_{\L|\L}}(r,\SThres)
    \overline{\probserv{\L}}(r)r \dd r,\label{eq:PcSCUB}\\
    \underline{\Pc}(\SThres) = \int_0^{\infty} 2 \pi \lambda  e^{-
 {\SThres{\noise}r^{\alpha_{\L}}}/
    {\cgain{\L}}
    }\ 
   \underline{ \Lfunc_{\L|\L}}(r,\SThres)
    \underline{\probserv{\L}}(r)r \dd r.\label{eq:PcSCLB}
    \end{align}
\end{theorem}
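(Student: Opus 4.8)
The plan is to start from the LAP-model coverage formula \eqref{eq:PcFinalEqLOSOnly},
\begin{align*}
\Pc(\SThres) = \int_0^{\infty} 2 \pi \lambda\, e^{-\SThres\noise/\ell_\L(r)}\, \Lfunc_{\L|\L}(r,\SThres)\, \probserv{\L}(r)\, r\, \dd r,
\end{align*}
and to isolate which factors of the integrand carry the blockage correlation. The prefactor $2\pi\lambda r$ and the noise term $e^{-\SThres\noise/\ell_\L(r)}$ are manifestly independent of the correlation, so the entire dependence is concentrated in $\probserv{\L}(r)$ (from \eqref{eq:probserv:LOSOnly}) and the interference Laplace transform $\Lfunc_{\L|\L}(r,\SThres)$ (from \eqref{eq:LTI:LOSOnly}). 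Both are of the form $\exp\big(-(\text{nonnegative weighted integral of } \probln{\L,\L})\big)$, with the correlation entering only through the joint LoS probability $\probln{\L,\L}(r,t,\theta)$.

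Next I would invoke Lemma \ref{lemma:blockcorr_monotonic}, which states that $\probln{\L,\L}(r,t,\theta)$ increases monotonically as the blockage length is scaled up at fixed $\beta$. Since $\probserv{\L}(r)$ and $\Lfunc_{\L|\L}(r,\SThres)$ are each decreasing exponentials of a positively weighted integral of $\probln{\L,\L}/\probln{\L}$, both functions decrease monotonically as the correlation grows. Their extreme values are exactly those recorded in Lemmas \ref{theorem:P_ub} and \ref{theorem:P_lb}: the independent-blocking lower bound \eqref{eqn:joint_prob_lb} on $\probln{\L,\L}$ produces the upper bounds $\overline{\probserv{\L}}$ and $\overline{\Lfunc_{\L|\L}}$, while the maximally-correlated upper bound \eqref{eqn:joint_prob_ub} produces the lower bounds $\underline{g_\L}$ and $\underline{\Lfunc_{\L|\L}}$.

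With monotonicity of the two correlation-dependent factors established, the remaining step is order preservation. Because $\probserv{\L}(r)\ge 0$ and $\Lfunc_{\L|\L}(r,\SThres)\in(0,1]$ are both nonnegative and both decrease with correlation, their product is nonnegative and also decreases; multiplying by the nonnegative, correlation-free factors $2\pi\lambda r\, e^{-\SThres\noise/\ell_\L(r)}$ preserves this, so the full integrand decreases pointwise in $r$ as correlation increases. Integrating against the nonnegative measure $\dd r$ then yields that $\Pc(\SThres)$ is decreasing in the correlation and sandwiches it between the two extremes. Substituting the independent-blocking bounds $\overline{\probserv{\L}}$, $\overline{\Lfunc_{\L|\L}}$ gives $\overline{\Pc}$ in \eqref{eq:PcSCUB}, and the maximally-correlated bounds $\underline{g_\L}$, $\underline{\Lfunc_{\L|\L}}$ give $\underline{\Pc}$ in \eqref{eq:PcSCLB}; the explicit $e^{-\SThres\noise r^{\alpha_\L}/\cgain{\L}}$ form follows by inserting the BPLP LoS path-loss $\ell_\L(r)=\cgain{\L}r^{-\alpha_\L}$.

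The main obstacle I anticipate is the bookkeeping needed to confirm that the two extreme configurations align consistently across \emph{both} factors, that is, that the single ``independent blocking'' limit simultaneously maximizes $\probserv{\L}$ and $\Lfunc_{\L|\L}$, and the single ``infinite-length blockage'' limit simultaneously minimizes both, so that the \emph{products} (and not merely the individual factors) are genuinely sandwiched. This is what makes the bounds on $\Pc$ legitimate, since in general bounding a product requires bounds on each factor in the same direction together with nonnegativity. The structure that rescues the argument is that both factors are monotone in the single scalar functional $\probln{\L,\L}$, which is itself monotone in the correlation by Lemma \ref{lemma:blockcorr_monotonic}; stating this cleanly is the only delicate point, after which the theorem follows as a near-corollary of the earlier lemmas.
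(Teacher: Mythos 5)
Your proposal is correct and follows essentially the same route as the paper: the paper's (very brief) justification is precisely to substitute the factor-wise bounds of Lemmas \ref{theorem:P_ub} and \ref{theorem:P_lb} into \eqref{eq:PcFinalEqLOSOnly}, relying on the fact that the only correlation-dependent quantities in the integrand are $\probserv{\L}(r)$ and $\Lfunc_{\L|\L}(r,\SThres)$, both nonnegative and both monotone decreasing in $\probln{\L,\L}$. Your added observation that the consistency of the two extremes across both factors is guaranteed because each is monotone in the single functional $\probln{\L,\L}$ (itself monotone in the correlation by Lemma \ref{lemma:blockcorr_monotonic}) is exactly the point the paper leaves implicit.
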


The above theorem indicates that independent blocking assumption case serves as an upper bound to real world $\Pc$ while expression for the case with infinite length blockages will give an lower bound on $\Pc$.

\subsection{Numerical Results}
\label{chap:num_results}

We now evaluate the impact of blockage correlation for a 2D deployment. The path-loss model and parameters are the same as 1D case. The BS density $(\lambda)$ taken is 30 BS/$\mathrm{km}^2$, $\beta$ is 0.014/m, and blockage density $(\mu)$ is 220/$\mathrm{km}^2$. The blockages have uniformly distributed length ($L\sim \mathsf{Unif}(0,L_\mathrm{max})$) and orientation ($\Delta\sim \mathsf{Unif}(0,\pi)$). 

\textbf{Distribution of the serving BS distance:} 
We first plot $G_\L(r)$ in Fig. \ref{fig:dist2} which denotes the probability that the serving BS is LoS and farther than distance $r$ from the typical user, along with the derived bounds.  It can be observed that the analysis is close to the exact values (represented by the simulated results) despite the first order correlation assumption. It can be verified that the $\underline{G_\L(r)}$ (maximally correlated case) is the lower bound and $\underline{G_\L(r)}$  (independent case) is the upper bound of the analytical value of $G_\L(r)$. We  observe that the independent blocking assumption does not always gives a good approximation and  incorrectly overestimates the coverage.

\begin{figure}[ht!]
	\begin{center}
		\includegraphics[width=.5\textwidth]{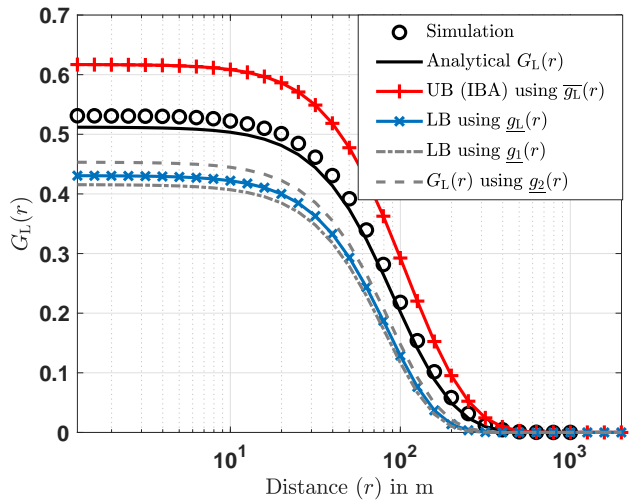}
		\caption[Distribution of distance to the LoS serving base station for 
			$\beta=3\beta_0,\mu = \mu_0$. Distribution of distance to the
			 line of sight serving base station for
			 $\beta=3\beta_0,\mu = \mu_0$]
		{	
			Variation of $G_\L(r)$ (the probability that the serving 
			BS is farther than distance $r$ and LoS from the typical user) 
			with $r$ for a 2D mmWave cellular network in the presence of random blockages. Here, $\beta=.014/m$ and $\mu=220/km^2$. 
			The bounds are evaluated by substituting respective bounds 
			of $\probserv{\L}(r)$ as given in Lemma \ref{lemma:glxgeneralbounds}, in \eqref{eq:GLr1}.  Here, UB and LB refer to upper and lower bounds. IBA refers to independent blocking assumption.
		}
		\label{fig:dist2}
		\end{center}
\end{figure}

\textbf{LoS association probability:} 
Fig. \ref{fig:AP2} shows the impact  of varying blockage correlation on the LoS association probability $A_\L$, along with various derived bounds. Here, we fix the $\beta$ and vary the maximum blockage length $L_\mathrm{max}$. The result shows that the $A_\L$ decreases as blocking correlation increases. For smaller blockages,the independent blocking assumption works well. However, as blocking correlation increases, $A_\L$ deviates significantly from $\overline{A_\L}$ and reaches the lower bound (corresponding to maximally correlated case) as $L_\mathrm{max}\rightarrow \infty$. Therefore, when the blockage length is large, inclusion of the first order correlation in the analysis gives a better accuracy.

\begin{figure}[ht!]
	\begin{center}
				\includegraphics[width=.5\textwidth]
				{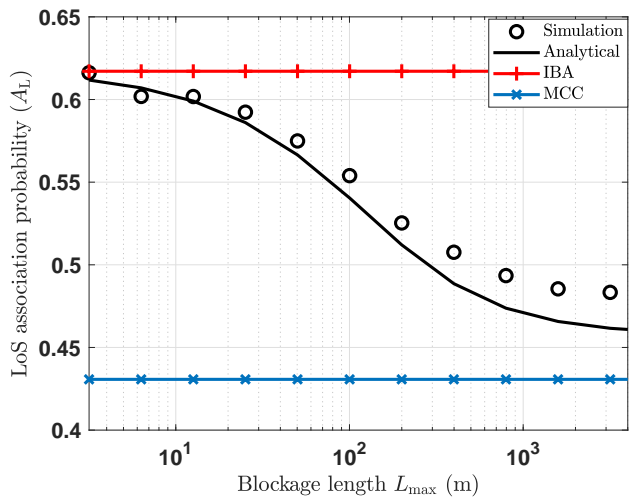} 
				\caption[coverage_probability]
					{Variation of LoS association 
					probability vs maximum blockage length $L_\mathrm{max}$ for a 2D mmWave cellular network in the presence of random blockages. Here, $\beta= 0.014/m$ is fixed. When the blockage length is large, independent blocking assumption no longer remains valid and inclusion of the first order correlation in the analysis gives a better accuracy.}
				\label{fig:AP2}
	\end{center}
\end{figure}

\begin{figure}[ht!]
	\begin{center}			
			\includegraphics[width=.5\textwidth]{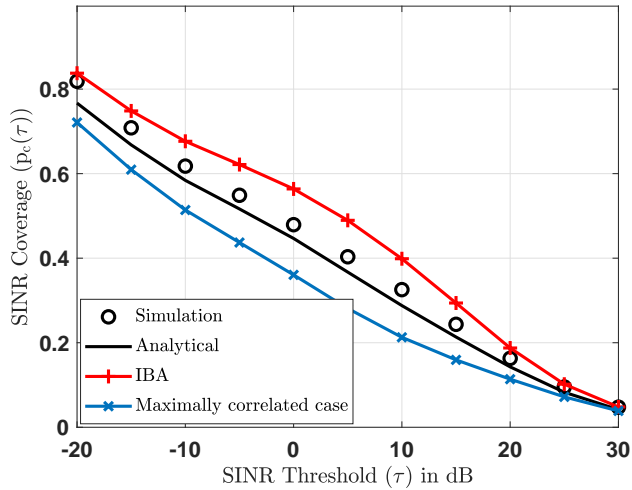} 
			\caption[Coverage distribution  for $\beta=3\beta_0,\mu = \mu_0$]{Coverage probability  of a 2D mmWave cellular network in presence of random blockage. Here, $\beta=.014/m$ and $\mu=220/km^2$. Values for IBA and maximally correlated case are generated substituting respective values of $\probserv{\L}(r)$ and 
				$\Lfunc_{\v|\s{}}(r,\tau)$ in \eqref{eq:PcFinalEq}.
			}
			\vspace{-.1in}
			\label{fig:SINR2}
	\end{center}
\end{figure}

\textbf{SINR distribution:}  
Fig. \ref{fig:SINR2} shows that the variation of the SINR coverage with respect to threshold along with various bounds. Here also, we can observe that considering blocking correlation gives better accuracy. At higher threshold, the different between the two expressions (the independent blocking and the maximally correlated case) may diminish as noise dominates. However, the difference is still significant when the SIR coverage of two cases is compared.

\section{Conclusions}
In this work, we evaluated the performance of a mmWave cellular network in the presence of blockages while considering the inherent blocking correlation present in the system. Owing to differences in the analysis of a 1D and 2D mmWave cellular network, we analyze them separately. We saw that the independent blocking assumption used in most prior work is generally inaccurate, especially when blockages tend to be large. We also investigated the impact of the blocking correlation on the LoS association probability and the coverage probability. The work has numerous possible extensions. First, the proposed analysis can also be extended to study other association rules and channel models. Second, the proposed framework can be utilized to model the correlation in the object detection probability of sensing and radar applications and investigate their performance.   
In dense urban environments with tall buildings, cellular systems have recently seen increasing BS deployments in the vertical direction as well
resulting in a nearly 3D deployment of BSs. Our analysis can be extended to analyze such 3D deployments.


\appendices
\section{Proof of Theorem \ref{thm:GLr1D}}\label{app:thm:GLr1D}
The joint probability $G_{\L}(r)$ can be written as
\begin{align}
        G_{\L}(r)&=\expects{\Phi}{\sum\nolimits_{\X \in \Phi}
        \indside{E_{\L,\X}}
        \indside{\X\notin\Ball(\origin,r)}}\label{eq:11}\\      
        &
        \stackrel{(a)}{=} \lambda \int_{r}^\infty \palmprobx{
        E_{\L,\x}
         }{\x}\  \dd \x
        +\lambda \int_{-\infty}^{{-r}} \palmprobx{
        E_{\L,\x}
        }{\x}\  \dd \x \label{eq:GLR1D}\\
        &
        = \lambda \int_{r}^\infty g_\L(\x)  \dd \x
        +\lambda \int_{-\infty}^{{-r}} g_\L(\x)\dd\x
        \stackrel{(b)}{=}
        2\lambda \int_{r}^\infty g_\L(x)  \dd x
         \label{eq:GLR1D_1}
        \end{align}
where $(a)$ is due to the Campbell-Mecke theorem \cite{AndGupDhi16}, 
and 
$(c)$   is due to symmetry. 
Now, \eqref{eq:gLx1D_middlestep1} is due to the law of total probability. \eqref{eq:gLx} is obtained by
using \eqref{eq:cond_gL} in \eqref{eq:gLx1D_middlestep1}.

\section{Proof of \eqref{eq:cond_gL} and \eqref{eq:cond_gN}}
\label{appendix:proofth1}
To show \eqref{eq:cond_gL}, we note that 
\begin{align*}
g_\L(x \mid q, q')
 &= \prob {\text{The BS at distance } x \text{ is serving and LoS}}
 \\
 &=\indside{x<\qr}
 	\prob{|\y|>x\ \forall \y <\qr, \y>0\ } \ \prob{|\y|>x\ \forall |\y| <\ql,\y<0\ } \\
	&
	\qquad\qquad
 	\ \prob{|\y|> \excNL{}(x)\  \forall \y >\qr,\y>0\ }\ \prob{|\y|> \excNL{}(x) \ \forall |\y| >\ql,\y<0\ }.
\end{align*}
Now, using the PGFL of a PPP and $\excNL{}(x) < x$, we can write the above expression as,
\begin{align*}
g_\L(\x \mid q, q') = \mathbbm{1}(\x < q) \exp \left(-\lambda \int_{0}^{\min (q, x)} \dd y\right) \exp \left(-\lambda \int_{0}^{\min \left(q^{\prime}, x\right)} \dd y \right)  \exp \left(-\lambda \int_{q^{\prime}}^{\max(\excNL{}(x), q^{\prime})} \dd y\right).
\end{align*}
Solving it further, we get the desired result. Similarly to prove \eqref{eq:cond_gN}, we have 
\begin{align*}
g_\L(x \mid q, q')
 &=\indside{x>\qr}
 	\prob{|\y|>\excLN{}(|\x|)\ \forall |\y| <\qr, \y>0\ } \ \prob{|\y|>\excLN{}(|\x|)\ \forall |\y| <\ql,\y<0\ } \\
	&
	\qquad\qquad
 	\ \prob{|\y|> x\  \forall \y >\qr,\y>0\ }\ \prob{|\y|> x \ \forall |\y| >\ql,\y<0\ }.
\end{align*}
Now, using the PGFL of a PPP, we get
\begin{align*}
g_\N(x \mid q, q') = \mathbbm{1}(x > q) & \exp \left(-\lambda \int_{0}^{q} d y\right) \exp \left(-\lambda \int_{q}^{x} \dd y\right) \times \\
& \exp \left(-\lambda \int_{0}^{\min \left(q^{\prime}, \left(\excLN{}(x)\right) \right)} \dd y \right) 
 \exp \left(-\lambda \int_{q^{\prime}}^{\max(q^{\prime}, x)} \dd y\right).
\end{align*}
Solving further, we get the desired result.

\section{Derivation of Laplace Transform of Interference}\label{appendix:proof-interference}

Let us first compute   $\mathcal{L}_{I| E_{\L,\x},\ql,\qr}(s)$ which denotes the LT of interference conditioned on the serving BS at $\x$ is LoS and on the locations of nearest blockage. 
From Appendix \ref{appendix:proofth1}, we know that every other LOS BS $\y$ should satisfy $|\y|>x$ and every NLOS BS should satisfy
$
|\y| > \excNL{(x)}
$.

Now,  the sum interference $I(\Phi)$ from all BSs in $\Phi$ can be written as
\begin{align*}
I(\Phi)=I(\Phi_\L^+)+I(\Phi_\L^-)+I(\Phi_\N^+)+I(\Phi_\L^-)
\end{align*}
where individual terms denote the interference from the particular subset of the point process. We will compute the LT of individual terms. Due to their independence, the LT of $I$ will be the product of these LTs. Now, 
\begin{align*}
   \mathcal{L}_{I(\Phi_\L^+)| E_{\L,\x},\ql,\qr}(s)&=
   \expect{
   	\expS{ -s \sum\nolimits_{\Y_i \in \Phi_\L^{+}} H_i \ell_{\L}(|\Y_i|) 
	\indside{\Y_i > x}
	}
	}\\
	&= \expect{
   	\expS{ -s \sum\nolimits_{\Y_i \in \Phi} H_i \ell_{\L}(|\Y_i|) 
	\indside{\Y_i > x }\indside{\Y_i>0}\indside{\Y_i<\qr}
	}
	}.
  \end{align*}
   Now, from the Laplace functional of the marked PPP, 
   \begin{align*}
   \mathcal{L}_{I(\Phi_\L^+)| E_{\L,\x},\ql,\qr}(s)&
	= \expS{-\lambda \int_0^\infty 
	\indside{\qr>y>x}
	\left(1-\expects{H}{
   	\expS{ -s  H \ell_{\L}(y) 
	}}\right)\dd y
	}\\
	&=   {\expS{-\lambda \int_x^\qr 
	\left(1-\frac{1}{
   	1+{ s  \ell_{\L}( y) 
	}}\right)\dd y
	}}
  \end{align*}
  where the last step uses  the moment generating function (MGF) of $H\sim\mathsf{Exp}(1)$ and the fact that $x<q$. Similarly,
  \begin{align*}
   \mathcal{L}_{I(\Phi_\L^-)| E_{\L,\x},\ql,\qr}(s)&=
   \expect{
   	\expS{ -s \sum\nolimits_{\Y_i \in \Phi_\L^{-}} H_i \ell_{\L}(|\Y_i|) 
	\indside{|\Y_i| > {x} }
	}
	}\\
	&= \expect{
   	\expS{ -s \sum\nolimits_{\Y_i \in \Phi} H_i \ell_{\L}(|\Y_i|) 
	\indside{|\Y_i| > {x} }\indside{\Y_i<0}\indside{|\Y_i|<\ql}
	}
	}\\
	&={\expS{-\lambda \int_x^{\infty} 	
	\left(1-\frac{1}{
   	1+{ s \ell_{\L}(y)  
	}}\right)\indside{y<\ql}
	\dd y}}
  \end{align*}
Similarly,
\begin{align*}
   \mathcal{L}_{I(\Phi_\N^+)| E_{\L,\x},\ql,\qr}(s)&=
   \expect{
   	\expS{ -s \sum\nolimits_{\Y_i \in \Phi_\N^{+}} H_i \ell_{\N}(|\Y_i|) 
	\indside{\Y_i > \excNL{}(x)}
	}
	}\\
	&= \expect{
   	\expS{ -s \sum\nolimits_{\Y_i \in \Phi} H_i \ell_{\N}(|\Y_i|) 
	\indside{\Y_i > \excNL{}(x) }\indside{\Y_i>0}\indside{\Y_i>\qr}
	}
	}\\
	&={\expS{-\lambda \int_\qr^\infty 
	\left(1-\frac{1}{
   	1+{ s\ell_{\N}( y)  
	}}\right)\dd y
	}}.
  \end{align*}
  Here, we have used the fact that $\qr >\excNL{}(x)$. Also, 
  \begin{align*}
   \mathcal{L}_{I(\Phi_\N^-)| E_{\L,\x},\ql,\qr}(s)&=
   \expect{
   	\expS{ -s \sum\nolimits_{\Y_i \in \Phi_\N^{-}} g_i \ell_{\N}(|\Y_i|)  
	\indside{\Y_i > \excNL{}(x)}
	}
	}\\
	&= \expect{
   	\expS{ -s \sum\nolimits_{\Y_i \in \Phi} g_i \ell_{\N}(|\Y_i|)  
	\indside{\Y_i > \excNL{}(x) }\indside{\Y_i<0}\indside{|\Y_i|>\ql}
	}
	}\\
	&={\expS{-\lambda \int_{\excNL{}(x)}^\infty 
	\left(1-\frac{1}{
   	1+{ s \ell_{\N}(y) 
	}}\right)\indside{y>\ql}\dd y
	}}.
  \end{align*}
 Multiplying the four terms, we get the value of $ \mathcal{L}_{I| E_{\L,\x},\ql,\qr}(s)$.

The derivation for the term $ \mathcal{L}_{I(\Phi_\N^-)| E_{\L,\x},\ql,\qr}(s)$ is similar. One difference is due to the fact that 
if the user is receiving from an NLoS BS located at $\x>0$, it would mean that there are no LoS BS on the right side of the user. Hence, we will not include the LoS interference from $\Phi_\L^+$. 

\section{Proof of Theorem \ref{thm:Pc1D}}
\label{app:Pc1DProof}

The SINR coverage   as defined in \eqref{eq:pcdef} can be written as
\begin{align}
\Pc(\SThres) = &\expect{\sum\nolimits_{\X_i\in\Phi}\indside{\SINR>\SThres}\indside{\X_i \text{ is serving}}}\\
=&\expect{\sum\nolimits_{\X_i\in\Phi}\indside{\SINR>\SThres}\indside{E_{\L,\X_i}}}+\expect{\sum\nolimits_{\X_i\in\Phi}\indside{\SINR>\SThres}\indside{E_{\N,\X_i}}}.\label{eq:Pc1D_1}
\end{align}
Now, let us consider the first term in \eqref{eq:Pc1D_1}. Using the conditional probability law, we get
\begin{align}
&\expect{\sum\nolimits_{\X_i\in\Phi}\indside{\SINR>\SThres}\indside{E_{\L,\X_i}}}
=\expects{\ql,\qr}{\sum\nolimits_{\X_i\in\Phi}\prob{\SINR>\SThres|\ql,\qr,E_{\L,\X_i}}\prob{E_{\L,\X_i}|\ql,\qr}}\nonumber\\
&\stackrel{(a)}{=}2\expects{\ql,\qr}{\sum\nolimits_{\X_i\in\Phi\cap (0,\infty)}\prob{\SINR>\SThres|\ql,\qr,E_{\L,\X_i}}\prob{E_{\L,\X_i}|\ql,\qr}}\nonumber\\
&\stackrel{(b)}=2\lambda \int_0^\infty \prob{\SINR>\SThres|\ql,\qr,E_{\L,\x}}g_\L(\x|\ql,\qr) \dd\x,
\end{align}
where $(a)$ is due to the symmetry around the origin and $(b)$ is from the Campbell-Mecke theorem. Now, the inner term denoting the conditional distribution of SINR is 
\begin{align}
&\prob{\SINR>\SThres|\ql,\qr,E_{\L,\x}}=  \prob{\frac{H \ell_{\L}(x)}{\noise + I} > {\SThres}}
= \mathbb{P}\left( H > \frac{{\SThres} (\noise+ I)}{\ell_{\L}(x)} \right) \nonumber\\
&= \exp\left( - \frac{{\SThres} \noise }{\ell_{\L}(x)} \right)  
\expect{\exp\left( - \frac{{\SThres I}}{\ell_{\L}(x)} \right) |\ql,\qr,E_{\L,\x}}
= \exp\left( - \frac{\SThres \noise}{\ell_{\L}(x)} \right)  
\mathcal{L}_{I| E_{\L,\x},\ql,\qr}\left( \frac{\SThres}{ \ell_{\L}(x) } \right).\label{eq:86a}
\end{align}

Similarly for the second term in \eqref{eq:Pc1D_1}, we get
\begin{align}
&\expect{\sum_{\X_i\in\Phi}\indside{\SINR>\SThres}\indside{E_{\N,\X_i}}}
=2\lambda \int_0^\infty \prob{\SINR>\SThres|\ql,\qr,E_{\N,\x}}g_\N(\x|\ql,\qr) \dd\x,\\
&\text{with }\quad\ \ \,\prob{\SINR>\SThres|\ql,\qr,E_{\N,\x}} 
= \exp\left( - \frac{\SThres \noise}{\ell_{\N}(x)} \right) \cdot 
\mathcal{L}_{I| E_{\N,\x},\ql,\qr}\left( \frac{\SThres}{ \ell_{\N}(x)} \right)\label{eq:88a}.
\end{align}

Substituting \eqref{eq:1DLTIL} and \eqref{eq:1DLTIN} in \eqref{eq:86a} and \eqref{eq:88a} and then putting the resultant values in \eqref{eq:Pc1D_1}, we get desired result.

\section{}
\label{appendix:lsns1}
Let $B_{\x}$ denote the BS at  $\x{}$ from the user. Let $\|\x{}\|=x$.
\begin{align*}
    \probserv{\L}(\x) &= \palmprobx{B_{\x} \text{ is LoS serving BS}}{\x}
    =\palmexpectx{\indside{\text{$B_{\x}$ is LoS}}\indside{B_{\x}\text{ is serving}}}{\x}\\
     &=\palmprobx{{\text{$B_{\x}$ is LoS}}}{\x}\ \times \ \palmprobx{{B_{\x}\text{ is serving}}|{\text{$B_\x$ is LoS}}}{\x}
     =\probln{\L}({x})\ \palmprobx{{B_{\x}\text{ is serving}}|{\text{$B_{\x}$ is LoS}}}{\x}\\
& =\palmprobx{\prod_{\mathbf{t} \in \Phi} 
 \Big(
 \indside{
\dist{\t}\ge x
 } 
 \indside{B_\t \text{ is LoS}}
 +
 \indside{
\dist{\t}\ge \excNL{}(\dist{\x})
 } 
 \indside{B_\t \text{ is NLoS}}\Big)|B_\x\text{ is LoS }}{\x}
 \\
    & = \exp
    \left(-\lambda \int_0^{x} \int_0^{2\pi}\prob{\text{BS at $(t,\theta)$ is LoS}|\text{LoS }B_\x}\dd \theta t \dd t
     \right. \nonumber \\
    & \left.\qquad \qquad
      -\lambda \int_0^{\excNL{}(\dist{\x})} \int_0^{2\pi} 
    \prob{\text{BS at $(t,\theta)$ is NLoS}|\text{LoS } B_{\x}}
    \dd\theta t \dd t\right)
    \end{align*}
where the last step is due to use of polar coordinates $\t=(t,\theta)$. Using the conditional probability of LoS and NLoS, we get the final result.

\section{Proof of Lemma \ref{lemma:glxgeneralbounds} }
\label{app:glxgeneralbounds}

From \eqref{eq:glx} , we get
\begin{align}
&\probserv{\L}(x) = \probln{\L}({x}) 
\expS{-\lambda\int_0^{2\pi} \left[ \int_0^{{x}}\frac{\probln{\L,\L}({x},t,\theta)}{\probln{\L}({x})}t\dd t \dd \theta-
 \int_0^{\excNL{}({x})}
\frac{\probln{\L,\N}({x},t,\theta)}
{\probln{\L}({x})}
t\dd t \right] \dd\theta}\nonumber\\
&= \probln{\L}({x}) 
\expS{-\lambda\int_0^{2\pi} \left[
\int_{\excNL{}({x})}^{{x}}\frac{\probln{\L,\L}({x},t,\theta)}{\probln{\L}({x})}t\dd t \dd \theta-
 \int_0^{\excNL{}({x})}
\frac{\probln{\L,\L}({x},t,\theta)+\probln{\L,\N}({x},t,\theta)}
{\probln{\L}({x})}
t\dd t \right]
\dd\theta}\nonumber\\
&= \probln{\L}({x}) 
\expS{-\lambda\int_0^{2\pi} 
\int_{\excNL{}({x})}^{{x}}\frac{\probln{\L,\L}({x},t,\theta)}{\probln{\L}({x})}t\dd t \dd \theta}
\expS{-
\lambda \pi {\excNL{}^2({x})}
}.\label{app:eq:gLSimpleExp}
\end{align}
It can be seen that $g_\L$ decreases with increase in $\probln{\L,\L}({x})$, which means that it decreases with blockage correlation. 
Now, 
 \eqref{eq:G_Ind_Gen} can be obtained by substituting the bound on $\probln{\L,\L}(r,t,\theta)$ as given in \eqref{eqn:joint_prob_lb} in  \eqref{app:eq:gLSimpleExp}. We substitute \eqref{eqn:joint_prob_ub} in \eqref{eq:glx} to get \eqref{eq:G_Inf_p1_Gen}. Now using the fact that 
$$\int_0^{\pi/2} e^{-a \sin^2(\theta)}\dd\theta=\pi/2 \expU{-a/2}I_0(a/2),$$
and further integrating it, we get \eqref{eq:G_Inf_Gen}.

\section{Proof of Theorem \ref{thm:laplaceI}}\label{app:thm:laplaceI}
\vspace{-.3in} 
\begin{align}
 &\laplace{I_\v|X_0=r,E_\u}(\nu)=\expect{\expS{-\nu \sum\nolimits_{\X_i\in\Phi_\v, \X_i\ne \X_0}H_i 
 \ell_\v{(X_i)}
 }\mid X_0=r,E_\u }\nonumber\\
  &=\expect{\expS{-\nu\!\! \sum\nolimits_{\X_i\in\Phi, \X_i\ne \X_0}H_i 
 \ell_\v{(X_i)}   
  \indside{\X_i \text{'s type is } \v}
    }\right. \left. \mid X_0=r,E_\u \vphantom{\frac12}}\nonumber\\
     &=\expect{
     \exp\left(
     -\nu \sum\nolimits_{\X_i\in\Phi\setminus\Ball(\origin,\exc{\v|\u}(r)) }
     H_i
      \ell_\v{(X_i)}   \ 
    \prob{\X_i \text{'s type is } \v \mid \text{BS at $r$ is of type } \u}     \right)}\nonumber.
    \end{align}
Now, using the PGFL of PPP \cite{HaenggiBook} and MGF of $H_i$, we get
\begin{align}
 \laplace{I_\v|R_\u=r,E_\u}(\nu)=  &\expS{-\frac{
   \lambda}{\probln{\u}(r)} 
  \int_{\exc{\v|\u}(r)}^{\infty}
  \frac{\nu 
  \ell_\v(t)
  			}
  {1+\nu
   \ell_\v(t) 
  } 
\int_0^{2\pi}
{\probln{\u,\v}(r,t,\theta)} d\theta t \dd t }\nonumber.
\end{align}

\section{Proof of Lemma \ref{lemma:simpleboundgL2D}}
\label{appendix:sin1}

For \eqref{eq:G_lb} and \eqref{eq:G_ub}, we first derive the following novel linear bound on $\sin^2 \theta $,  given as
\begin{align}
\label{eqn:sin_sqrd_bound}
    1+m(x-\frac{\pi}{2})\leq \sin^2 x\leq mx
\end{align}
for   $\theta \in [ 0,\pi/2 ] $ with $m=1.38$. The proof is as follows.

Let us take $f(x) = \sin^2x-mx$ where $m=1/1.38$.
On maximizing $f(x)$ for $x \in [0,\pi/2]$ using first derivative condition, we get
 $$\max_{x \in [0,\pi/2]} f(x) =  \frac{1}{2} (1+\sqrt{1-m^2}-m(\pi- \sin^{-1}(m))).$$
We can observe that for $m=1/1.38$,
 $$\max_{x \in [0,\pi/2]} f(x) = -0.00006.$$
 This means that 
$ f(x) < 0 \ \forall \ x \in [0,\pi/2]$, which proves the upper bound. Similarly, we can prove the lower bound. 
We can also verify from Fig. \ref{fig:app:sinsqbound} that $\sin^2x$ is closely  bounded by the lines $y=x/1.38$ and $y=x/1.38+1-\pi/2.76$.
\begin{figure}[ht!]
\begin{center}
\includegraphics[width=.4\textwidth]{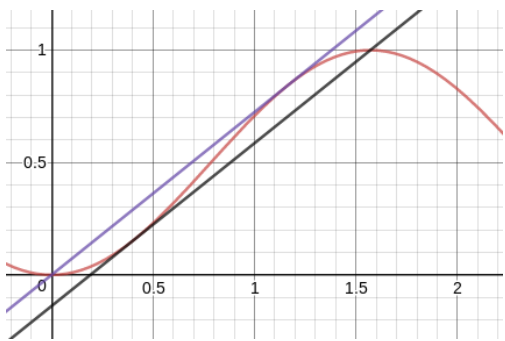}
\caption{Illustration showing linear bounds on $\sin^2x$ for $x \in [0,\frac{\pi}{2}]$}
\label{fig:app:sinsqbound}
\end{center}

\end{figure}

\vspace{.2in}

\bibliographystyle{IEEEtran}
\bibliography{thesis} 

\end{document}